\documentclass[sigplan, screen, nonacm]{acmart}

\AtBeginDocument{%
  }

\setcopyright{none}
\copyrightyear{2027}
\acmYear{2027}
\acmDOI{XXXXXXX.XXXXXXX}
\acmConference[ASPLOS 2027]{The 32nd ACM International Conference on
  Architectural Support for Programming Languages and Operating
  Systems}{2027}{TBD}
\acmISBN{978-X-XXXX-XXXX-X/XX/XX}

\usepackage{booktabs}
\usepackage{tikz}

\usepackage{dsfont} % for 1 math symbol equation 9

\usepackage{array}
\usepackage{tabularx}
\usepackage{multirow}
\usepackage{makecell}

\usetikzlibrary{arrows.meta, positioning, fit, backgrounds}

\newtheorem{definition}{Definition}
\newtheorem{proposition}{Proposition}

\newcolumntype{Y}{>{\centering\arraybackslash}X}

\newcommand{\Cx}{\ensuremath{C\mathbf{x}=\mathbf{b}}}
\newcommand{\syn}{\textsc{SUTURE}}
\newcommand{\purify}{\textsc{Purify}}
\newcommand{\ER}{\ensuremath{\mathbb{E}[1/R]}}
\newcommand{\lam}{\ensuremath{\lambda}}

\begin{document}

\title{SUTURE: Syndrome-Guided Repair for Segmented Feasibility-Preserving VQAs on Noisy Hardware}

% \author{Anonymous Authors}
% \affiliation{%
%   \institution{Anonymous Institution}
%   \city{}
%   \country{}}
\author{Sokea Sang}
\affiliation{%
  \institution{Pukyong National University}
  \city{Busan}
  \country{South Korea}
}
\email{sangsokea@pukyong.ac.kr}

\author{Leanghok Hour}
\affiliation{%
  \institution{Pukyong National University}
  \city{Busan}
  \country{South Korea}
}
\email{leanghok@pukyong.ac.kr}

\author{Sanghyeon Lee}
\affiliation{%
  \institution{Pukyong National University}
  \city{Busan}
  \country{South Korea}
}
\email{sanghyeon@pukyong.ac.kr}

\author{Youngsun Han}
\authornote{Corresponding author.}
\affiliation{%
  \institution{Pukyong National University}
  \city{Busan}
  \country{South Korea}
}
\email{youngsun@pknu.ac.kr}

\begin{abstract}
Constrained binary optimization is a representative class of NP-hard problems in scheduling, resource allocation, and finance. Segmented feasibility-preserving variational quantum algorithms (VQAs) are a promising approach that restricts ideal circuit evolution to feasible assignments and executes short measure-and-reseed segments. The existing boundary runtime enforces feasibility through \emph{purification}, which discards measurements that violate the constraints. As problem size and noise increase, feasible measurements become rare; if no shot survives, the execution chain terminates. In our 72-qubit graph-coloring experiment on a real-device IBM Heron, 11 of 12 purification chains terminate before completing all segments.
We propose \emph{SUTURE}: syndrome-guided repair, a runtime that repairs infeasible measurements instead of discarding them and is deployable on current quantum devices.
SUTURE exploits a parity-check-like structure induced by the problem constraints: violated constraints form a \emph{syndrome} that detects and localizes corruption and, under suitable structural conditions, often identifies a correction. SUTURE combines three stages: (1) a compile-time profiler that uses the compiled constraints and feasible initialization samples to predict single-flip recovery, with a maximum error of 0.011 across four constraint families; (2) a bounded runtime decoder that replaces purification inside the segmented execution loop; and (3) a regime analysis that identifies when repair is preferable to purification. In simulation up to 120 qubits, \syn{} continues the execution chain far beyond the noise level at which purification collapses. In the same 72-qubit experiment on IBM Heron hardware, SUTURE completes all segments in all 12 runs, and hardware timing experiment, decoding adds 1.6\% to execution-path latency and less than 1\% to total pipeline latency.

\end{abstract}

\begin{CCSXML}
<ccs2012>
 <concept>
  <concept_id>10010583.10010786</concept_id>
  <concept_desc>Hardware~Emerging technologies</concept_desc>
  <concept_significance>500</concept_significance>
 </concept>
 <concept>
  <concept_id>10010520.10010521.10010542.10010550</concept_id>
  <concept_desc>Computer systems organization~Quantum computing</concept_desc>
  <concept_significance>500</concept_significance>
 </concept>
</ccs2012>
\end{CCSXML}

\ccsdesc[500]{Hardware~Emerging technologies}
\ccsdesc[500]{Computer systems organization~Quantum computing}

\keywords{quantum computing, variational quantum algorithms, constrained
optimization, runtime systems, error recovery, syndrome decoding}

\maketitle

%% =====================================================================
\section{Introduction}
\label{sec:intro}

% Constrained binary optimization appears in scheduling, resource
% allocation, and finance. A solver must find a binary assignment that
% satisfies hard constraints while optimizing an objective. Variational
% quantum algorithms (VQAs) commonly encode violations as objective
% penalties, but this approach allows the circuit to explore the entire
% binary space and can place substantial measured probability mass on
% invalid assignments. Feasibility-preserving VQAs instead construct circuit
% transitions that remain within the feasible subset. Rasengan~\cite{rasengan}
% uses this approach in a segmented execution model: it runs a short
% transition circuit, measures the register, and uses the surviving
% measurements to seed the next segment. In the absence of noise, the
% construction preserves feasibility. On hardware, the boundary runtime
% enforces it through \emph{purification}, which discards every measured
% bitstrings that violates a constraint.
In combinatorial optimization, constrained binary optimization
problems seek the optimal value of decision variables that can
only take binary values, subject to a system of constraints~\cite{integer_programming}. These
problems arise widely in resource allocation~\cite{register_allocation_gcp}, route
planning~\cite{apply_qaoa_routing}, engineering design~\cite{qauntum_op_satellite_mission}, and financial portfolio selection~\cite{benchmarking_portfolio},
and they are typically NP-hard~\cite{analyze_NP_hard}: the cost of finding the exact optimum on classical computers grows exponentially with problem size. 
Quantum computing can represent and transform many binary configurations through superposition and entanglement, and quantum algorithms have demonstrated theoretical advantages for certain computational tasks~\cite{quantum_measurements_abelian, shor_algorithm}. In particular, as a class of hybrid quantum-classical algorithms, variational quantum algorithms (VQAs)~\cite{variational_alg_review, qoaoa_nisq} use parameterized quantum circuits together with a classical optimizer to exploit this capability for combinatorial optimization. For constrained problems, however, conventional penalty-based VQAs~\cite{penalty_qubo_formulation, driver_hamiltonians_contrain_op, hea_vqa, penalty_partition_qubo} explore the entire space $2^n$, in which feasible solutions may form only a small fraction. Constraint-preserving approaches instead restrict the ideal quantum dynamics to the feasible subspace, avoiding measurement probability being intentionally assigned to invalid solutions~\cite{chocoq}.

Rasengan~\cite{rasengan} realizes this principle through transition Hamiltonians derived from the null space of the compiled constraint matrix $C$ (In linear algebra, a single feasible solution can span the
full solution space via homogeneous basis vectors~\cite{rasengan}). Starting from one feasible seed, each transition moves amplitude between feasible basis states while preserving $C\mathbf{x}=\mathbf{b}$ under noise-free execution. To make the resulting transition sequence deployable on current hardware, Rasengan divides it into short segments and measures and reinitializes the register between them. This segmentation creates an important runtime boundary: hardware noise can produce infeasible measurements, and Rasengan's purification policy discards them before initializing the next segment. When no feasible measurement survives, the execution chain terminates.

Table~\ref{tab:runtime_comparison} highlights the resulting runtime gap. Purification acts at every segment boundary but can only discard infeasible measurements, causing the execution chain to terminate when no feasible shot survives. Conventional repair methods~\cite{sqd, sqd-codespace, regrid-qaoa, ising-bench, chainbreak-qst,qac-me} instead operate after the end of the execution loop and therefore cannot influence subsequent segments. A suitable boundary runtime must repair infeasible measurements under a bounded classical budget, reinject the repaired states, and preserve the current execution trajectory. This distinction motivates an in-loop repair mechanism rather than another terminal post-processing method.

\begin{table}[t]
\centering
\caption{Feasibility handling in segmented VQAs.}
\label{tab:runtime_comparison}

\footnotesize
\setlength{\tabcolsep}{2.3pt}
\renewcommand{\arraystretch}{1.12}
\renewcommand{\tabularxcolumn}[1]{m{#1}}

\begin{tabularx}{\columnwidth}{
    |>{\centering\arraybackslash}m{0.19\columnwidth}
    |Y|Y|Y|Y|}
\hline

\multirow{2}{*}{\makecell{\textbf{Design}\\\textbf{property}}}
& \multicolumn{2}{c|}{\textbf{Existing handling}}
& \multicolumn{2}{c|}{\textbf{\syn{}}} \\
\cline{2-5}

& \textbf{Purify~\cite{rasengan}}
& \makecell{\textbf{Terminal}\\\textbf{repair}}
& \textbf{\syn{}$_C$}
& \textbf{\syn{}$_G$} \\
\hline

\textbf{Execution point}
& \makecell{Boundary\\(segment)}
& \makecell{Terminal\\(post-pro.)}
& \makecell{Boundary\\(segment)}
& \makecell{Boundary\\(segment)} \\
\hline

\textbf{Invalid result}
& Discard
& \makecell{Repair\\(end exec.)}
& \makecell{Repair\\(segment)}
& \makecell{Repair\\(segment)} \\
\hline

\textbf{Reinjection}
& Yes
& No
& Yes
& Yes \\
\hline

\textbf{Repair input}
& No
& Varies
& $C,\mathbf{b}$
& $C,\mathbf{b},G$ \\
\hline

\textbf{Objective-blind}
& N/A
& Varies
& Yes
& No \\
\hline

\textbf{Chain liveness}
& \makecell{Terminate\\under noise}
& No chain
& \multicolumn{2}{c|}{
    \makecell{Robust to noise,\\continues liveness}
  } \\
\hline

\end{tabularx}

\begin{minipage}{\columnwidth}
\footnotesize\noindent
\textit{
Purification discards infeasible measurements and may terminate
the chain when no feasible shot survives. \syn{} repairs and reinjects
measurements during execution; \syn{}$_C$ uses only the constraint
system, whereas \syn{}$_G$ additionally uses graph adjacency.
Unlike \syn{}, terminal repair is primarily suited to traditional
monolithic QAOA or VQA workflows, where it improves measurements
only after the complete circuit execution. \syn{} instead repairs
and reinjects infeasible measurements at intermediate boundaries,
allowing a segmented chain to remain live.
}
\end{minipage}

\end{table}

Purification makes feasibility a liveness concern. Suppose a segment
produces $S$ shots and each shot is feasible with probability $\phi_t$.
The segment retains only $S\phi_t$ shots in expectation, and the chain
cannot continue if it retains none. In the structured encodings we
study, feasible assignments occupy an exponentially small fraction of
the binary space, so $\phi_t$ falls as problem size and noise increase.
Failure then compounds across segment boundaries. In our 72-qubit IBM
Heron experiment with $S=64$, 11 of 12
purification chains terminate before completing 96 segments; reducing
the budget to 32 shots moves the median termination point from segment
76 to segment 4. The device continues to produce measurements, but the
boundary runtime discards too many of them to sustain execution.

We observe that those discarded measurements still contain problem-specific
information. Each infeasible bitstring reveals exactly which constraints
it violates. We call this set of violations its \emph{syndrome}. The
syndrome identifies the variables involved in the failed constraints,
which often reduces a global correction problem to a small local
search. This signal requires no additional qubits or measurements: it
comes from the constraint system already available to the compiler.
Moreover, each segmented transition touches only a small active set, so
violations often remain localized to one or two encoding blocks, a
property we measure directly on hardware. This observation motivates
our central question: \emph{can a boundary runtime use the problem's
own constraints to recover useful feasible states from the
measurements that purification discards, and when should it do so?}

\begin{figure}[tbp]
  \centering
  \includegraphics[width=\linewidth]{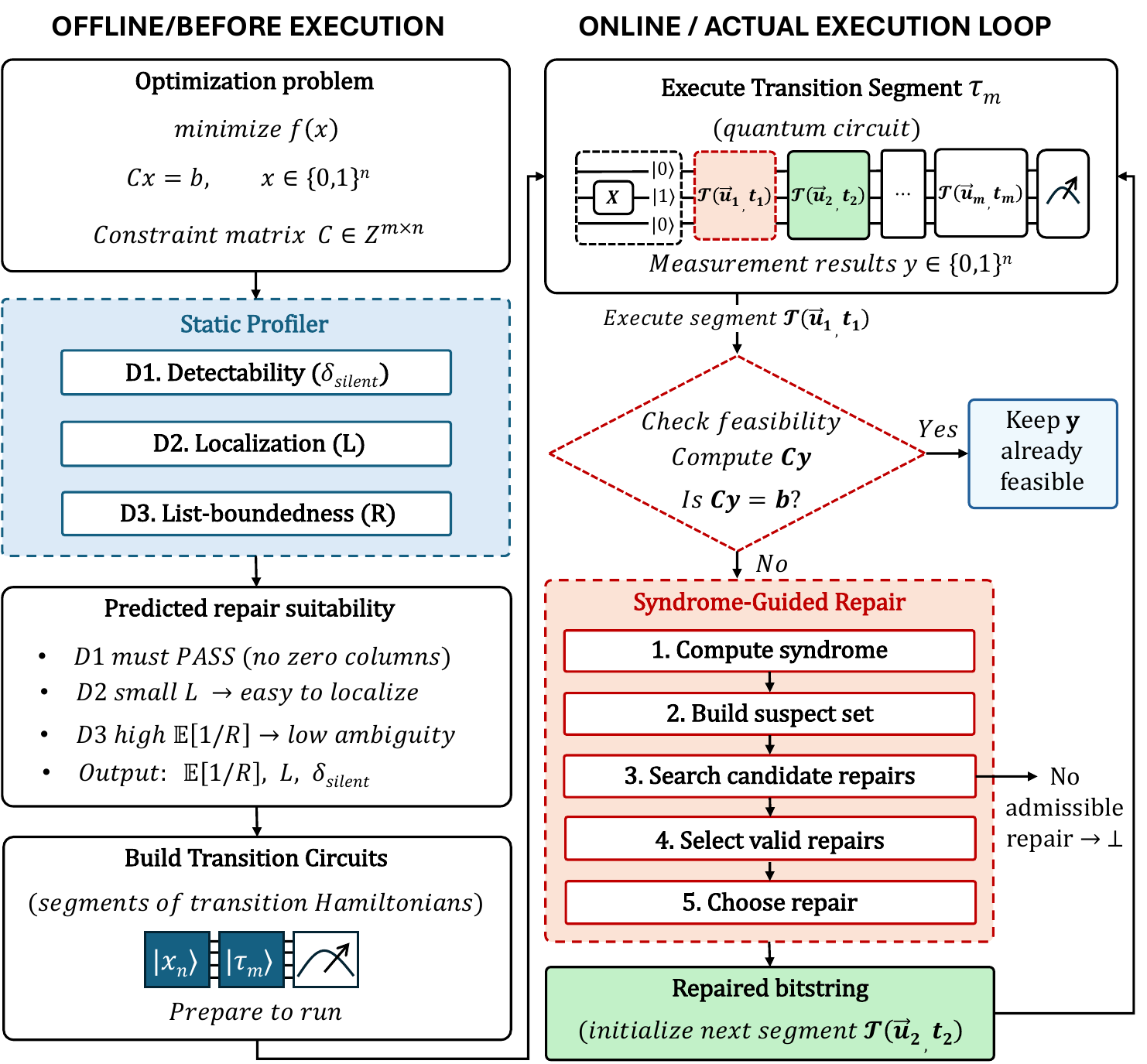}
  \caption{Overview of the \syn{} pipeline. Stage 1 profiles repairability from the compiled constraints before execution; Stage 2 localizes and repairs infeasible measurements inside the segmented execution loop. 
  %Stage 3 regime characterization is offline and therefore not shown.
  }
  \label{fig:pipeline}
\end{figure}

We answer this question with \emph{\syn{}} (summarized in Figure~\ref{fig:pipeline} and detailed in Section~\ref{sec:method}), a syndrome-guided repair
runtime for segmented feasibility-preserving VQAs. Before execution, a
static profiler checks whether the constraint encoding makes corruption
detectable, localizable, and sufficiently unambiguous to repair. During
execution, an objective-blind decoder uses each syndrome to search only
the implicated variables, returns a feasible correction when one is
found within a fixed budget, and reinjects the corrected bitstrings into the
chain. Finally, an offline regime analysis separates two deployment
decisions: the quality crossover $\lambda_q$, above which repair wins
on quality while both policies remain live, and the purification survival cliff
$\lambda_s$, beyond which discard is no longer a reliable boundary
policy. This distinction matters because repair is not uniformly
better: below a measured crossover, purification remains the
conservative choice.

% % \textbf{Contributions.}
% This paper makes three contributions. First, we introduce a static
% repairability profiler and validates its single-flip accuracy law across
% four constraint families. Second, we develop \syn{}$_C$, an
% objective-blind repair mechanism that operates inside a segmented VQA
% rather than as terminal post-processing, together with measured-only
% endpoints and a contamination screen for attribution; we also develop
% \syn{}$_G$, an explicitly graph-aware specialization for soft-edge
% encodings. Third, we provide a two-boundary regime analysis on
% simulators and real hardware that identifies both where repair improves
% execution and where purification remains preferable.
Overall, this paper makes the following contributions:
\begin{enumerate}
    \item We introduce a static repairability profiler and validate its single-flip accuracy law across four constraint families.

    \item We develop \syn{}$_C$, an objective-blind repair mechanism that, to our knowledge, is the first deployable repair mechanism to operate inside a segmented VQA execution loop rather than as terminal post-processing.
    We pair it with measured-only endpoints and a contamination screen for rigorous attribution. We also develop \syn{}$_G$, an explicitly graph-aware specialization for soft-edge encodings.

    \item We provide a deployment regime analysis using simulators and real
    quantum hardware, identifying when repair improves execution, when purification remains preferable, and where repair reaches its own liveness limit.
\end{enumerate}

% We evaluate 109 instance packs spanning four constraint families such as graph coloring problem (GCP)~\cite{graph_coloring_gcp}, facility location problem (FLP)~\cite{facility_fcp}, k-partition problem (KPP)~\cite{k_partition_kpp} and set cover problem (SCP)~\cite{set_cover_scp} with problem sizes ranging from
% 15 to 120 qubits. The profiler predicts measured single-flip recovery
% across 70 instances with a maximum error of 0.011. In
% contamination-screened hard-edge graph-coloring tests, objective-blind
% \syn{}$_C$ separates from purification and its continuation baselines
% at 78 and 90 qubits in the post-cliff regime. The graph-aware
% \syn{}$_G$ specialization extends the soft-edge scale study to 120
% qubits. On IBM Heron hardware (ibm\_marrakesh)~\cite{IBMQuantum2026}, \syn{} improves quality while both
% runtimes are live at 36 and 48 qubits and preserves liveness at 72
% qubits; in the 36-qubit timing experiment, decoding contributes less
% than 1\% of total pipeline latency. 

% update 
We evaluate \syn{} across 109 benchmark packs from four constraint families graph coloring (GCP)~\cite{graph_coloring_gcp}, facility location (FLP)~\cite{facility_fcp}, $k$-partition (KPP)~\cite{k_partition_kpp}, and set cover (SCP)~\cite{set_cover_scp}, spanning 15 to 120 qubits. Across 70 instances, the static profiler predicts measured single-flip recovery with a maximum error of only 0.011. In contamination-screened hard-edge GCP experiments at 78 and 90 qubits, objective-blind \syn{}$_C$
completes all segments and improves measured proper-coloring success
over purification. The graph-aware \syn{}$_G$ specialization
extends the soft-edge scaling study to 120 qubits. On IBM Heron hardware~\cite{IBMQuantum2026}, \syn{}$_G$ improves solution quality at 36 and 48 qubits; at 72 qubits, all \syn{}$_G$ chains complete while 11 of 12 purification chains terminate. Finally, decoding adds only 1.6\% of execution-path latency in our 36-qubit timing experiment, or less than 1\% of total pipeline latency.
%These experiments compare boundary runtimes under matched segmented-VQA workloads; they do not claim quantum advantage over classical solvers such as DSATUR~\cite{dsatur}. 
Our \syn{} source code and evaluation artifacts are publicly available in an anonymized repository at \url{https://anonymous.4open.science/r/suture_source-5F07}.

\section{Background}
\label{sec:background}
\label{sec:bg-cop}
\label{sec:bg-vqa}

\textbf{Problem model.}
At a high level, constrained binary optimization chooses a binary
assignment, rejects assignments that violate the constraints, and uses
an objective to rank the remaining feasible assignments. We represent
the hard constraints as linear equalities; inequalities can be
converted to this form with binary slack variables. The resulting
problem is
\begin{equation}
\min_{\mathbf{x}}\; f(\mathbf{x}),
\qquad \text{s.t.}\; C\mathbf{x}=\mathbf{b},\;
\mathbf{x}\in\{0,1\}^{n}.
\label{eq:cop}
\end{equation}
Here $\mathbf{x}$ contains $n$ binary variables, $C$ is an
$m\times n$ integer constraint matrix, $\mathbf{b}$ is the required
constraint value, and $f$ is the objective. The constraints define the
feasible set $F=\{\mathbf{x}:C\mathbf{x}=\mathbf{b}\}$; the objective
ranks only the assignments in $F$. Our running example is graph
coloring (GCP). A one-hot block of $K$ bits encodes the color of each
vertex, and an assignment row requires exactly one selected color per
vertex. In the soft-edge encoding, the objective counts conflicts
between adjacent vertices; in the hard-edge encoding, those edge
requirements instead enter $C$ through slack variables. Facility
location (FLP), set cover (SCP), and
$k$-partition (KPP) use the same linear form but have different row
density and slack structure. Section~\ref{sec:method-d123} shows how
these structural differences determine repairability.

\textbf{Feasibility-preserving VQAs.}
Penalty-based VQAs, including penalty QAOA~\cite{penalty_qubo_formulation, driver_hamiltonians_contrain_op, hea_vqa, penalty_partition_qubo}, allow the
circuit to explore all $2^n$ binary assignments and penalize constraint
violations through the objective. Noise can therefore move measured
probability far outside $F$. Constraint-preserving approaches instead
engineer the circuit dynamics to remain within $F$. Examples include
commuting-Hamiltonian constructions and constrained mixers such as XY
mixers~\cite{xy-mixers} and Choco-Q~\cite{chocoq}.
Rasengan~\cite{rasengan} builds transition Hamiltonians from directions
that do not change the constraint value. Specifically, a transition
direction $u$ lies in the null space of $C$, so $Cu=0$, and the
resulting ideal transition preserves $C\mathbf{x}=\mathbf{b}$. The
corresponding circuit touches only the qubits where $u$ is nonzero
(Figure~\ref{fig:transition}).

\textbf{Segmented execution and purification.}
Rasengan applies these transitions as a chain of short segments. After
each segment, the register is measured, and the resulting classical
bitstrings seed the next segment. This measurement boundary is what
makes a per-shot classical policy possible without altering the quantum
circuit. The incumbent policy is purification: it retains bitstrings in
$F$ and discards bitstrings that violate \Cx{}. Without noise, every
measurement is feasible and the filter does nothing. With noise, a
segment may produce no feasible measurement; without a replacement
state, the chain terminates. Because this risk occurs at every boundary,
segmentation turns an output-validity problem into an execution-liveness
problem. Section~\ref{sec:method} develops a boundary runtime that uses
the failed constraints as a recovery signal instead of discarding the
entire measurement.

\begin{figure}[t]
  \centering
  \includegraphics[width=\linewidth]{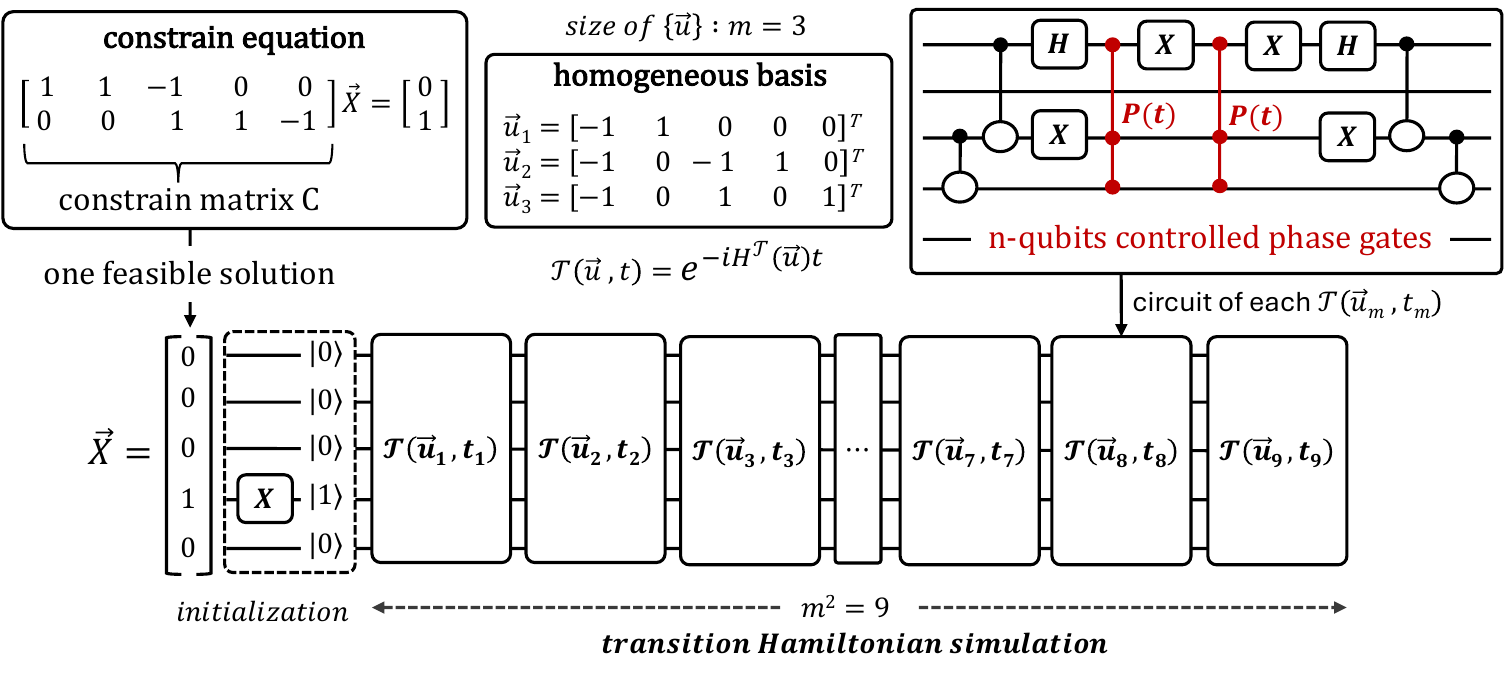}
  \caption{Rasengan segmented execution, adapted from~\cite{rasengan}. A vector \(u\) satisfying
  \(Cu=0\) defines a transition circuit whose active set contains the
  qubits at the nonzero positions of \(u\). Measurement between segments
  enables purification or \syn{} to act at each boundary.}
  \label{fig:transition}
\end{figure}

\section{\syn{}: Syndrome-Guided Repair}
\label{sec:method}

Figure~\ref{fig:pipeline} summarizes \syn{}. At compile time, a static
profiler examines the constraint system and available feasible
initialization samples to estimate whether a corruption can be detected,
localized, and repaired unambiguously. During execution, a per-shot
decoder identifies the constraints violated by each measured bitstring and
searches only the implicated variables for a feasible repair. A third,
offline stage determines when repair should replace purification as noise
increases. We first define the segmented execution model and its boundary
runtime, then present the profiler, decoder, and deployment regimes.

\subsection{Setting: segmented feasibility-preserving chains}
\label{sec:method-setting}

We inherit Rasengan's circuit layer unchanged~\cite{rasengan}. Let $u$ be
a null-space vector of $C$, so $Cu=0$, and let
$A(u)=\{j:u_j\neq0\}$ be its \emph{active set}. Rasengan constructs a
transition Hamiltonian $H_u$ from $u$, exponentiates it, and decomposes
the result into a short circuit of one- and two-qubit gates acting only on
$A(u)$. In one-hot GCP, for example, a transition swaps two color entries
of one vertex and, in the hard-edge model, updates the associated slack
bits. A chain of these transitions, with trained angles, expands the
solution distribution. The register is measured between segments, and
the measured bitstrings seed the next segment.

Two inherited properties enable \syn{}. First, each transition is local:
it touches only $A(u)$, so local noise tends to produce violations in a
small number of encoding blocks. We test this observable locality under
device-calibrated noise and on hardware
(Section~\ref{sec:eval-device}). Second, measurement classicalizes the
register at every boundary. The inter-segment state is therefore a
distribution over bitstrings on which a per-shot classical decoder can
act without modifying the quantum circuit.

\subsection{Problem formulation}
\label{sec:method-formal}

We next define the runtime's input, output, and observable repair signal.
The optimization problem itself is unchanged from
Section~\ref{sec:bg-cop}; \syn{} changes only the policy applied at a
segment boundary.

We use the instance $(f,C,\mathbf{b})$ and feasible set $F$ defined in
Equation~\eqref{eq:cop}. As in Rasengan~\cite{rasengan}, the compiler
converts inequalities to equalities with binary slack variables, which
are included in $\mathbf{x}$.

\begin{definition}[Segmented execution]
\label{def:chain}
A segmented chain of length $T$ with shot budget $S$ maintains a
\emph{register} $X_t$, a multiset of $S$ feasible bitstrings, initialized
to $S$ copies of a feasible seed $\mathbf{x}_0$. Segment $t$ applies a
short unitary $U_t$ to the basis states represented by $X_t$ and then
measures $S$ shots, producing the multiset
$Y_t=\{\mathbf{y}^{(1)},\dots,\mathbf{y}^{(S)}\}$ of binary bitstrings.
Let $\mathcal{H}_F=\operatorname{span}\{|\mathbf{x}\rangle:
\mathbf{x}\in F\}$ be the feasible subspace. By construction,
$U_t\mathcal{H}_F\subseteq\mathcal{H}_F$, so every noise-free
measurement belongs to $F$. Under a noise channel
$\mathcal{N}_\lambda$ with scale $\lambda$, measurements may leave
$F$.
\end{definition}

\begin{definition}[Boundary runtime]
\label{def:runtime}
A \emph{boundary policy} maps the current measured multiset and,
optionally, the runtime history $H_t$ (earlier registers, the feasible
seed) to a multiset of feasible bitstrings. When the policy returns between
one and $S-1$ bitstrings, the runtime resamples them with replacement to
form the $S$-bitstring register $X_{t+1}$. A policy is \emph{history-free}
when it uses only $Y_t$, in which case we write $X_{t+1}=\Pi(Y_t)$.
The chain \emph{dies at $t$} under $\Pi$ if the policy returns no
bitstrings. The incumbent policy is \emph{purification},
$\Pi_{\mathrm{P}}(Y)=Y\cap F$, which retains every feasible occurrence
in $Y$ and discards the rest. \syn{} is also history-free. The
purify+restart and purify+last baselines instead consult $H_t$ when
purification returns no survivors; Section~\ref{sec:eval-setup} defines
these fallbacks.
\end{definition}

Purification's failure mode depends on the \emph{feasible fraction}
$\phi_t=\Pr[\mathbf{y}\in F]$ under segment $t$'s measurement
distribution. Assuming $S$ independent shots with this common feasible
probability,
\begin{equation}
\Pr[\text{chain dies at }t]=(1-\phi_t)^S,
\label{eq:cliffprob}
\end{equation}
which rises sharply once $S\phi_t=O(1)$ (the per-segment 50\% death
point is $\phi_t=1-2^{-1/S}\approx\ln 2/S$); we use $\phi_t\approx 1/S$
as the engineering shorthand for this local threshold. Stage~3 later
defines the corresponding chain-level survival boundary.

An infeasible bitstring reveals more than the fact that purification would
discard it: it identifies the constraints that failed. This observable
signal drives both the profiler and the decoder.

\begin{definition}[Syndrome]
\label{def:syndrome}
Let $C_r$ denote row $r$ of $C$, and let $\operatorname{supp}(C_r)$ be
the indices of its nonzero entries. The \emph{syndrome} of a measured
bitstring $\mathbf{y}$ is the set of violated constraint rows. Its
\emph{suspect set} is the union of the variables touched by those rows:
\begin{equation}
\begin{aligned}
\sigma(\mathbf{y})
  &=\{r\in\{1,\dots,m\}:(C\mathbf{y})_r\neq b_r\},\\
L(\sigma(\mathbf{y}))
  &=\bigcup_{r\in\sigma(\mathbf{y})}\operatorname{supp}(C_r).
\end{aligned}
\label{eq:syndrome}
\end{equation}
Thus $\mathbf{y}\in F$ if and only if
$\sigma(\mathbf{y})=\varnothing$.
\end{definition}

For an encoding partitioned into decision blocks
$B_1,\dots,B_q$, a block is \emph{suspect} when it intersects
$L(\sigma(\mathbf{y}))$. We write
$c(\mathbf{y})=|\{i:B_i\cap L(\sigma(\mathbf{y}))\neq\varnothing\}|$
for the number of suspect blocks. In soft-edge one-hot GCP, where $C$
contains the assignment rows, this is exactly the number of blocks whose
measured one-hot weight is not one.

The runtime problem can now be stated without committing to a particular
decoder:

\begin{quote}
\emph{Given $Y_t$, $C$, and $\mathbf{b}$, produce a feasible next
register without inspecting the objective $f$. The policy must terminate
under a fixed per-shot search budget and should keep the chain live when
purification has no survivors, provided that at least one measurement
admits a repair within that budget.}
\end{quote}

Objective-blindness is essential for attribution: a policy that reads
$f$ is itself a classical optimizer, so its output quality cannot be
credited solely to the quantum computation
(Section~\ref{sec:eval-setup}). The design also relies on two premises.
\emph{A1 (classicalization)} holds by construction: $Y_t$ contains
classical bitstrings, not an unmeasured quantum state. \emph{A2
(observable locality)} is empirical: local transitions and predominantly
local device noise should keep $c(\mathbf{y})$ small. Unlike the latent
error $\mathbf{e}=\mathbf{y}\oplus\mathbf{x}$ relative to an unknown
clean bitstring $\mathbf{x}$, both $\sigma(\mathbf{y})$ and
$c(\mathbf{y})$ are available to the runtime. The block-structured
decoders rely directly on A2; the generic decoder relies on the analogous
condition that the suspect set admits a repair within a small flip
budget. A corruption can still be syndrome-silent when it maps one valid
block codeword to another; Stage~1 measures this loss of detectability, and
Appendix~\ref{app:eval-ablate} shows why one-hot encoding is preferable
for repair. Section~\ref{sec:eval-device} tests A2 directly on hardware.

\subsection{Stage 1: static repairability (D1--D3)}
\label{sec:method-d123}

Not every problem can be repaired accurately, and a deployer should know
this \emph{before} spending quantum time. Stage 1 answers that question
statically from $C$, $\mathbf{b}$, and feasible samples already
available during initialization. Let $C_j$ denote column $j$ of $C$,
and let $\mathbf{e}_j$ be the binary vector with a one only at position
$j$. Let $Q_F$ denote the initialization procedure's distribution over
feasible bitstrings. To profile a single-bit corruption, draw
$\mathbf{X}\sim Q_F$ and a position $J$ uniformly from
$\{1,\dots,n\}$, then set $\mathbf{Y}=\mathbf{X}\oplus\mathbf{e}_J$.

The profiler reports three quantities. D1 measures whether a single
flip is visible to the syndrome. D2 measures how many variables become
suspects after such a flip. D3 counts how many suspect flips would
restore feasibility and therefore measures repair ambiguity:
\begin{equation}
\begin{gathered}
\text{D1 (detection):}\quad
\delta_{\mathrm{silent}} = \tfrac{1}{n}\bigl|\{j : C_j=\mathbf{0}\}\bigr|,
\\
\text{D2 (localization):}\quad
W_{\mathrm{sus}} = \operatorname*{median}_{j}\,
  \bigl|L(\operatorname{supp}(C_j))\bigr|,\\
\text{D3 (ambiguity):}\quad
R(\mathbf{y}) = \bigl|\{ j\in L(\sigma(\mathbf{y})) :
\mathbf{y}\oplus\mathbf{e}_j\in F \}\bigr|,
\end{gathered}
\label{eq:d1}
\end{equation}
where $W_{\mathrm{sus}}$ is the median suspect-set width and
$R(\mathbf{y})$ is the repair degeneracy, or tie-break list size. D1
and D2 depend only on $C$; D3 additionally depends on $Q_F$. The
following propositions give their single-flip guarantees (proofs in
Appendix~\ref{app:proofs}).

\begin{proposition}[Detection]
\label{prop:detect}
If no column of $C$ is zero, then for every $\mathbf{x}\in F$ and every
$j$, $\sigma(\mathbf{x}\oplus\mathbf{e}_j)=\mathrm{supp}(C_j)
\neq\varnothing$; hence $\delta_{\mathrm{silent}}=0$.
\end{proposition}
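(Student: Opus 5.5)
The argument is a direct computation of the syndrome of a single-flip corruption. Fix $\mathbf{x}\in F$ and a position $j$, and set $\mathbf{y}=\mathbf{x}\oplus\mathbf{e}_j$. The first step is to write the flip as an integer perturbation. Since $x_j\in\{0,1\}$, flipping bit $j$ changes it by $\pm1$, so
\[
\mathbf{y}=\mathbf{x}+s_j\mathbf{e}_j,\qquad s_j=1-2x_j\in\{-1,+1\}.
\]
This holds as an identity of integer vectors, not just modulo 2.

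Next, apply $C$ by linearity over the integers:
\[
C\mathbf{y}=C\mathbf{x}+s_jC\mathbf{e}_j=\mathbf{b}+s_jC_j,
\]
using $C\mathbf{x}=\mathbf{b}$ because $\mathbf{x}\in F$. Row by row, $(C\mathbf{y})_r-b_r=s_jC_{rj}$. Because $s_j\neq0$, this difference is nonzero exactly when $C_{rj}\neq0$. By Definition~\ref{def:syndrome}, therefore,
\[
\sigma(\mathbf{y})=\{r: C_{rj}\neq0\}=\operatorname{supp}(C_j),
\]
where $\operatorname{supp}(C_j)$ denotes the row indices of the nonzero entries of column $j$.

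For non-emptiness, the hypothesis that no column of $C$ is zero gives $\operatorname{supp}(C_j)\neq\varnothing$ for every $j$. Hence every single flip of every feasible bitstring is detected. For the D1 consequence, $\delta_{\mathrm{silent}}$ is defined as $\tfrac1n|\{j:C_j=\mathbf{0}\}|$, and this set is empty by hypothesis, so $\delta_{\mathrm{silent}}=0$.

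The only delicate point is that the paper mixes XOR notation with integer arithmetic, since $C$ is an integer matrix. I would therefore make the sign $s_j$ explicit rather than working mod 2; otherwise a coefficient such as $C_{rj}=2$ would wrongly seem to vanish. Once the computation is done over $\mathbb{Z}$, nothing else is subtle.
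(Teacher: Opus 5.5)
Your proof is correct and takes the same approach as the paper's proof. Like you, the paper computes $(C\mathbf{y})_r-(C\mathbf{x})_r=C_{rj}(1-2x_j)$ over the integers, uses $C\mathbf{x}=\mathbf{b}$ to conclude $\sigma(\mathbf{y})=\operatorname{supp}(C_j)$, and reads off non-emptiness and $\delta_{\mathrm{silent}}=0$ from the no-zero-column hypothesis; your explicit point that the sign $s_j$ must be handled over $\mathbb{Z}$ rather than mod 2 is a useful clarification of that step.
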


\begin{proposition}[Localization]
\label{prop:localize}
For a single flip at $j$, $j\in
L(\sigma(\mathbf{x}\oplus\mathbf{e}_j))=
\bigcup_{r\in\mathrm{supp}(C_j)}\mathrm{supp}(C_r)$. Consequently, the
suspect-set size is at most the column degree of $j$ times the maximum
row width.
\end{proposition}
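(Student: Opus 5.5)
The argument reduces entirely to Proposition~\ref{prop:detect} followed by a counting bound. Fix $\mathbf{x}\in F$ and a position $j$, and set $\mathbf{y}=\mathbf{x}\oplus\mathbf{e}_j$. The first step is to write the flip additively. Since $x_j\in\{0,1\}$, we have $\mathbf{y}=\mathbf{x}+s\,\mathbf{e}_j$ with $s=1-2x_j\in\{\pm1\}$. Therefore
\[
C\mathbf{y}-\mathbf{b}=s\,C_j .
\]
This shows that row $r$ is violated exactly when $C_{rj}\neq0$, so $\sigma(\mathbf{y})=\operatorname{supp}(C_j)$. This is exactly the identity in Proposition~\ref{prop:detect}, which I take as given; note that it uses neither the nonzero-column hypothesis nor the integrality of $C$ beyond $C_{rj}\neq 0$. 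Substituting into Definition~\ref{def:syndrome} gives
\[
L(\sigma(\mathbf{y}))=\bigcup_{r\in\operatorname{supp}(C_j)}\operatorname{supp}(C_r).
\]

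The membership claim $j\in L(\sigma(\mathbf{y}))$ comes next, and it is the only step needing care. For every $r\in\operatorname{supp}(C_j)$ we have $C_{rj}\neq0$, hence $j\in\operatorname{supp}(C_r)$. So $j$ lies in the union provided the union is over a nonempty index set, that is, provided column $j$ is nonzero. If $C_j=\mathbf{0}$, the flip is syndrome-silent, the suspect set is empty, and the claim fails. I would therefore state explicitly that the proposition inherits the hypothesis of Proposition~\ref{prop:detect}: no zero column, or equivalently $\delta_{\mathrm{silent}}=0$ for the positions considered. This is the main obstacle in the sense of a missing hypothesis rather than a hard computation, and I would flag the silent case as the one D1 already accounts for.

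Finally, the size bound follows from the union bound on cardinalities:
\[
\bigl|L(\sigma(\mathbf{y}))\bigr|\le\sum_{r\in\operatorname{supp}(C_j)}|\operatorname{supp}(C_r)|\le |\operatorname{supp}(C_j)|\cdot\max_r|\operatorname{supp}(C_r)|.
\]
Here $|\operatorname{supp}(C_j)|$ is the column degree of $j$ and $\max_r|\operatorname{supp}(C_r)|$ is the maximum row width. One could optionally sharpen this by noting that $j$ is counted in every term, giving the bound $1+\deg(j)\,(w_{\max}-1)$, but the stated bound is all that is required. As a consequence, $W_{\mathrm{sus}}$ in D2 is at most the median over $j$ of these per-column bounds.
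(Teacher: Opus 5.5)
Your proof is correct and follows the paper's argument: reuse the identity $\sigma(\mathbf{x}\oplus\mathbf{e}_j)=\mathrm{supp}(C_j)$ from the detection computation, note that every row in $\mathrm{supp}(C_j)$ has $j$ in its support, and bound the union by column degree times maximum row width. Your remark that $j\in L(\sigma(\mathbf{y}))$ needs $C_j\neq\mathbf{0}$ is a correct refinement; the paper leaves this hypothesis implicit by appealing to the previous computation, and Proposition~\ref{prop:accuracy} only uses the result for detectable pairs.
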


\begin{proposition}[Accuracy law]
\label{prop:accuracy}
Fix any distribution over detectable pairs $(\mathbf{x},j)$ with
$\mathbf{x}\in F$ and $C_j\neq\mathbf{0}$. A decoder that draws
uniformly from the feasibility-restoring single-flip candidates recovers
the planted bitstring with probability exactly $\mathbb{E}[1/R(\mathbf{Y})]$.
\end{proposition}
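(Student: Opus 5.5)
Conditioning on the observed bitstring $\mathbf{Y}$ is the whole argument, and the tower rule then finishes it. First we check that the candidate list is nonempty for every detectable pair, so the decoder is well defined. For $\mathbf{Y}=\mathbf{x}\oplus\mathbf{e}_j$ with $\mathbf{x}\in F$ and $C_j\neq\mathbf{0}$, Proposition~\ref{prop:detect} gives $\sigma(\mathbf{Y})=\mathrm{supp}(C_j)\neq\varnothing$. Proposition~\ref{prop:localize} gives $j\in L(\sigma(\mathbf{Y}))$. Since $\mathbf{Y}\oplus\mathbf{e}_j=\mathbf{x}\in F$, the index $j$ lies in the candidate set
\[
\mathcal{R}(\mathbf{Y})=\{k\in L(\sigma(\mathbf{Y})):\mathbf{Y}\oplus\mathbf{e}_k\in F\}.
\]
Hence $R(\mathbf{Y})=|\mathcal{R}(\mathbf{Y})|\ge 1$, and $1/R(\mathbf{Y})$ is well defined.

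Next, fix an observed $\mathbf{y}$ and identify the success event. The decoder draws $K$ uniformly from $\mathcal{R}(\mathbf{y})$, independently of everything else given $\mathbf{y}$. It outputs $\mathbf{y}\oplus\mathbf{e}_K$. Bit flips are injective: $\mathbf{y}\oplus\mathbf{e}_k=\mathbf{x}$ holds if and only if $\mathbf{e}_k=\mathbf{y}\oplus\mathbf{x}=\mathbf{e}_j$, that is, $k=j$. So recovery of the planted $\mathbf{x}$ is exactly the event $K=J$.

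Conditioned on the pair $(\mathbf{X},J)$, the true index $J$ lies in $\mathcal{R}(\mathbf{Y})$ by the first step, so $\Pr[K=J\mid \mathbf{X},J]=1/R(\mathbf{Y})$. This holds because $\mathcal{R}(\mathbf{Y})$ is a deterministic function of $\mathbf{Y}$, and $\mathbf{Y}$ is itself determined by $(\mathbf{X},J)$. Taking expectation over the arbitrary distribution on detectable pairs gives $\Pr[\text{recover}]=\mathbb{E}[1/R(\mathbf{Y})]$ with equality, not merely as a bound. No assumption beyond detectability is needed, which explains why the law is stated for any distribution, including $Q_F$ with uniform $J$.

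The main point requiring care is that the decoder's randomness depends on $(\mathbf{X},J)$ only through $\mathbf{Y}$. Several planted pairs may yield the same $\mathbf{Y}$, but the uniform draw cannot tell them apart. Conditioning on the full pair $(\mathbf{X},J)$ rather than on $\mathbf{Y}$ avoids any need to compute the posterior over which pair was planted.
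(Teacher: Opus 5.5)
Your proposal is correct and follows the paper's proof. The planted index $j$ is always a feasibility-restoring candidate in $L(\sigma(\mathbf{Y}))$ (via Propositions~\ref{prop:detect} and~\ref{prop:localize}), the uniform draw selects it with conditional probability $1/R(\mathbf{Y})$, and averaging over the planting distribution gives $\mathbb{E}[1/R(\mathbf{Y})]$. Conditioning on the full pair $(\mathbf{X},J)$ rather than on $\mathbf{Y}$, and showing that recovery is exactly the event $K=J$, are slightly more careful versions of the paper's ``conditional on the corrupted bitstring'' step; the paper's signed-column-collision remark describes the competing candidates but is not needed for the law itself.
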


The profiler estimates $\mathbb{E}[1/R(\mathbf{Y})]$ before quantum
execution and reports the silent-flip rate separately. The evaluation
validates this prediction across four constraint families
(Section~\ref{sec:eval-gen}). Computing D1 and D2 requires only a scan
of $C$; estimating D3 from $N$ sampled feasible bitstrings costs
$O(Nnm)$. Across the 27 cross-family packs used for this estimate, the
profiler takes a median of 1.2 seconds and at most 2.7 seconds on one
CPU core. D1 also serves as an encoding screen: one-hot encodings have
no silent single-bit flips, whereas Appendix~\ref{app:eval-ablate}
shows that a more compact domain-wall encoding loses the repair signal.

\subsection{Stage 2: the \syn{} runtime decoder}
\label{sec:method-synd}

Stage 2 turns the syndrome into a repair inside the execution loop. For
each measured bitstring, the decoder computes the syndrome, pins every
variable outside the suspect set, searches the remaining variables under
a fixed budget, and reinjects one feasible repair into the next register.

\begin{definition}[\syn{} decoder]
\label{def:synd}
Let $d_H(\mathbf{y},\mathbf{z})$ denote Hamming distance. For a measured
bitstring $\mathbf{y}$, the ideal \emph{repair set} is
\begin{equation}
\mathcal{R}(\mathbf{y})=
\operatorname*{arg\,min}_{\substack{\mathbf{z}\in F\\
\mathbf{z}_j=\mathbf{y}_j\ \forall j\notin L(\sigma(\mathbf{y}))}}
d_H(\mathbf{y},\mathbf{z}),
\label{eq:repairset}
\end{equation}
the feasible bitstrings closest to $\mathbf{y}$ in Hamming distance among
those that agree with $\mathbf{y}$ outside the suspect set. This
agreement condition is the decoder's \emph{pinning} rule. Let
$\mathcal{S}(\mathbf{y})$ be the candidate set produced by a particular
implementation, and let $\mathrm{Budget}_{\mathcal{S}}(\mathbf{y})$
indicate that its fixed search limit is satisfied. The per-shot decoder
passes feasible bitstrings through unchanged, draws uniformly from a
nonempty candidate set within budget, and otherwise gives up
($\bot$):
\begin{equation}
D_{\mathcal{S}}(\mathbf{y})=
\begin{cases}
\mathbf{y} & \sigma(\mathbf{y})=\varnothing,\\[2pt]
\mathbf{z}\sim\mathrm{Unif}(\mathcal{S}(\mathbf{y}))
  & \mathcal{S}(\mathbf{y})\neq\varnothing,\;
    \mathrm{Budget}_{\mathcal{S}}(\mathbf{y}),\\[2pt]
\bot & \text{otherwise.}
\end{cases}
\label{eq:decoder}
\end{equation}
\end{definition}

The subscript identifies the information available to candidate
selection. \syn{}$_{C}$ reads only $(\mathbf{y},C,\mathbf{b})$ and is
objective-blind. \syn{}$_{G}$ additionally reads graph adjacency $G$ and
is reported separately. Unsubscripted \syn{} refers only to the shared
mechanism: syndrome localization, pinning, bounded search, and
reinjection.

Generic \syn{}$_{C}$ searches flip sets within
$L(\sigma(\mathbf{y}))$ in increasing Hamming weight. Its budget
$w_{\max}$ is the largest number of flipped positions it will enumerate;
the implementation uses $w_{\max}=6$. The fixed-$K{=}3$ GCP decoders
instead use the already-defined suspect-block count $c(\mathbf{y})$ and
give up when it exceeds a fixed block budget $c_{\max}$. 
For each suspect vertex block $v$, they form the candidate-color, or \emph{coset},
list
\begin{equation}
  \Lambda(v) =
  \begin{cases}
    \{c : y_{vc}=1\} & \text{if } \textstyle\sum_c y_{vc}=2,\\[2pt]
    \{1,\dots,K\}    & \text{otherwise},
  \end{cases}
  \label{eq:cosetlist}
\end{equation}
which contains the colors left indistinguishable by the observed
corruption. Enumerating these lists costs at most $K^{c_{\max}}$
candidate combinations. Let $\mathcal{C}_{\Lambda}(\mathbf{y})$ be the
feasible pinned completions that select each suspect block's color from
its list $\Lambda(v)$. The hard-edge specialization uses the minimum
Hamming subset
$\mathcal{R}_{\Lambda}(\mathbf{y})=
\operatorname*{arg\,min}_{\mathbf{z}\in
\mathcal{C}_{\Lambda}(\mathbf{y})}d_H(\mathbf{y},\mathbf{z})$.
Because hard-edge instances include edge rows in $C$, this selection
still reads only the constraint system.

In soft-edge GCP, adjacency belongs to the objective rather than $C$.
For this encoding, let $\mathrm{conf}_G(\mathbf{z})$ be the number of
monochromatic edges in the coloring represented by $\mathbf{z}$. Define
the minimum-conflict completion set
$\mathcal{R}_G(\mathbf{y})=\operatorname*{arg\,min}_{\mathbf{z}\in
\mathcal{C}_{\Lambda}(\mathbf{y})}\mathrm{conf}_G(\mathbf{z})$, and let
$\mu_{\mathbf{y}}$ be the implementation's seeded tie-break distribution
over this set. 
\begin{equation}
D_{G}(\mathbf{y})=
\begin{cases}
\mathbf{y} & \sigma(\mathbf{y})=\varnothing,\\[2pt]
\mathbf{z}\sim\mu_{\mathbf{y}},\
\mathrm{supp}(\mu_{\mathbf{y}})\subseteq\mathcal{R}_G(\mathbf{y})
  & c(\mathbf{y})\le c_{\max},\\[2pt]
\bot & \text{otherwise}
\end{cases}
\label{eq:decoderG}
\end{equation}
The graph-aware decoder is reported as a separate arm and lies outside
Proposition~\ref{prop:accuracy}: it ranks candidates by graph conflict
and therefore does not sample them uniformly.

% The three implementations can now be compared without forward
% references:

% \begin{center}\small
% \begin{tabular}{lccc}
% \toprule
% decoder & information & budget & selection\\
% \midrule
% generic \syn{}$_{C}$ & $\mathbf{y},C,\mathbf{b}$ & $w_{\max}{=}6$ &
%   uniform on $\mathcal{R}$\\
% hard-edge \syn{}$_{C}$ & $\mathbf{y},C,\mathbf{b}$ & $c_{\max}{=}6$ &
%   uniform on $\mathcal{R}_{\Lambda}$\\
% \syn{}$_{G}$ & $\mathbf{y},C,\mathbf{b},G$ & $c_{\max}$ ($6$/$8$) &
%   minimum $\mathrm{conf}_G$\\
% \bottomrule
% \end{tabular}
% \end{center}

The three implementations can now be compared without forward
references:

\begin{center}
\small
\setlength{\tabcolsep}{3pt}
\renewcommand{\arraystretch}{1.08}

\begin{tabularx}{\columnwidth}{
|>{\raggedright\arraybackslash}X
|>{\centering\arraybackslash}p{0.21\columnwidth}
|>{\centering\arraybackslash}p{0.18\columnwidth}
|>{\centering\arraybackslash}X|}
\hline
\textbf{Decoder} &
\textbf{Information} &
\textbf{Budget} &
\textbf{Selection} \\
\hline

generic \syn{}$_C$ &
$\mathbf{y},C,\mathbf{b}$ &
$w_{\max}{=}6$ &
uniform on $\mathcal{R}$ \\
\hline

hard-edge \syn{}$_C$ &
$\mathbf{y},C,\mathbf{b}$ &
$c_{\max}{=}6$ &
uniform on $\mathcal{R}_{\Lambda}$ \\
\hline

\syn{}$_G$ &
$\mathbf{y},C,\mathbf{b},G$ &
$c_{\max}$ ($6/8$) &
minimum $\mathrm{conf}_G$ \\
\hline
\end{tabularx}
\end{center}

\noindent
In the evaluation, \syn{}$_{C}$ supports the theory, cross-family tests,
the published $K{=}V$ (KV) benchmark suite, and the objective-blind
hard-edge ladder (Section~\ref{sec:eval-blind}). \syn{}$_{G}$ supports the
soft-edge scale ladder and hardware campaign
(Section~\ref{sec:eval-device}).

These definitions establish four properties. First, feasible
measurements pass through unchanged, so before resampling \syn{}'s
candidate pool contains every measured bitstring that purification would
retain. This per-boundary property does not imply end-to-end quality
dominance, which we measure separately. Second, pinning preserves every
variable outside the suspect set, making repair a local edit rather than
a re-solve. Third, \syn{}$_{C}$ is objective-blind. Finally, $\bot$
provides a bounded give-up rule and therefore defines the decoder's own
liveness limit.

For fixed budgets, generic \syn{}$_{C}$ costs $O(n^{w_{\max}}m)$ and the
block-structured variants cost
$O(\mathrm{poly}(n)\cdot K^{c_{\max}})$. Thus the implementation is
bounded by design rather than acting as an unrestricted nearest-feasible
solver. These budgets are fixed for each campaign and never tuned per
cell. They are nonbinding over the deployable range: in the
device-calibrated block profiles, 0.07\% of infeasible measurements
exceed $c_{\max}{=}6$ and none exceed 8; in the generic-decoder profiles,
0.07\% require more than $w_{\max}{=}6$ flips. On 16,000 planted
hard-edge single-flip
trials, the block specialization returns exactly the same repair set as
the ideal decoder; divergence begins only under multi-flip corruption,
so Proposition~\ref{prop:accuracy} remains explicitly a single-flip
result. Campaign claims use the archived stochastic trajectories; the
released artifact pins the Python hash seed for deterministic reruns.

\emph{What the decoder never sees.} The decoder never sees the
objective (for \syn{}$_{C}$; \syn{}$_{G}$'s adjacency use is reported
explicitly), never sees the optimum, and no categorical success is
ever scored on the register: a repair decoder applied to pure noise can
fabricate feasible registers (the solver-contamination trap,
Section~\ref{sec:eval-setup}), so categorical claims count only
measured bitstrings. Stage~3 formalizes this distinction between
system-level quality and measurement-level attribution.

\subsection{Stage 3: deployment regime analysis}
\label{sec:method-death}

\begin{table}[!b]
  \caption{Deployment regimes when the quality crossover
  \(\lambda_q\) is defined and precedes the purification cliff
  \(\lambda_s\). The outer boundary \(\lambda_r\) is \syn{}'s own
  liveness limit.}
  \label{tab:regimes}
  \footnotesize
  \renewcommand{\arraystretch}{1.15}
  \setlength{\tabcolsep}{3pt}
  \centering
  \begin{tabular}{|c|l|c|l|}
    \hline
    regime & condition & prefer & reason \\
    \hline
    I   & $\lambda<\lambda_q$ & \purify{} & no demonstrated repair
      advantage \\
    \hline
    II  & $\lambda_q\le\lambda<\lambda_s$ & \syn{} & both alive; repair
      wins quality \\
    \hline
    III & $\lambda_s\le\lambda<\lambda_r$ & \syn{} &
      \begin{tabular}{@{}l@{}}\purify{} survival ${<}1/2$;\\\syn{} for liveness\end{tabular} \\
    \hline
    IV  & $\lambda\ge\lambda_r$ & --- & repair leaves its own
      envelope \\
    \hline
  \end{tabular}
\end{table}

\begin{figure*}[t]
  \centering
  \includegraphics[width=\textwidth]{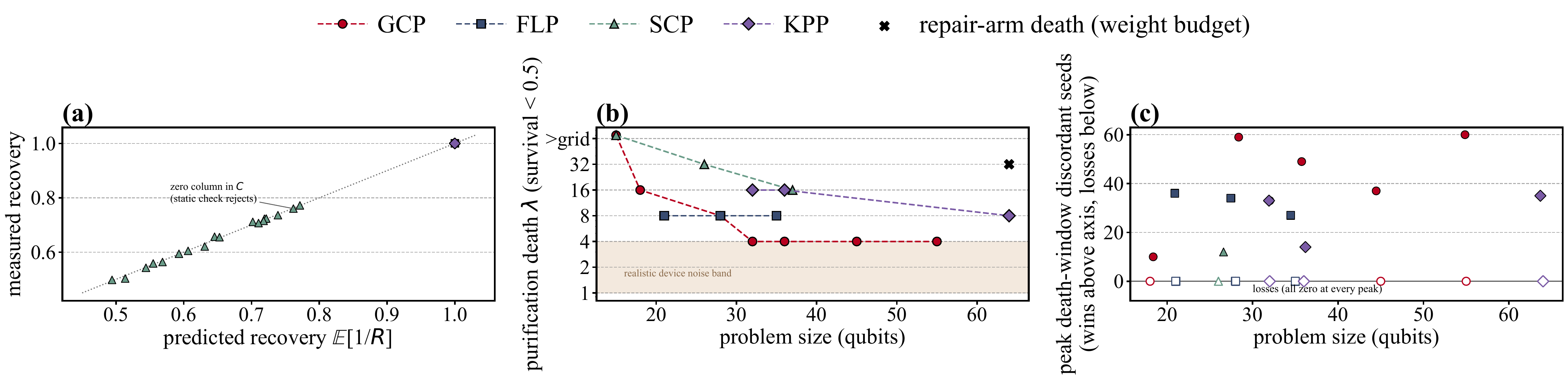}
  \caption{Generalization across four constraint families. (a) Predicted versus measured single-flip repairability. (b) Purification survival cliffs versus problem size. (c) Peak paired survival discordance in the post-cliff region. Positive counts are seeds where SUTURE survives and purification terminates; negative counts would indicate the reverse. No selected peak cell contains a SUTURE loss.}
  \label{fig:e6}
\end{figure*}

For each fixed instance, noise scale \lam{}, and shot budget $S$, the
deployment decision uses three endpoints: full-chain survival,
final-register quality, and optimum sampling from measured bitstrings. We
define them in that order.

\emph{Trajectory-preserving survival} requires the chain to reach
segment $T$ without replacing any boundary output with an external
register. Let $s_t(\Pi)=1$ when policy $\Pi$ uses such a substitution at
boundary $t$, and zero otherwise. Restart, revert, and oracle
reinjection therefore count as substitutions even when they allow
execution to continue:
\begin{equation}
\begin{gathered}
s_t(\Pi)=\mathds{1}\!\left[
  \substack{X_{t+1}\text{ is externally supplied}\\
  \text{rather than derived from }Y_t}\right],\\
\mathrm{surv}(\Pi)=\Pr\Bigl[\text{reach }T \wedge
  \textstyle\sum_{t<T} s_t=0\Bigr].
\end{gathered}
\label{eq:subst}
\end{equation}
Let $M(\Pi)=\biguplus_{t=0}^{T-1}Y_t$ be the multiset union of every
bitstrings measured during a run. Also let
$f^{*}=\min_{\mathbf{x}\in F}f(\mathbf{x})$ and
$\overline{f}(X_T)=|X_T|^{-1}\sum_{\mathbf{x}\in X_T}f(\mathbf{x})$.

\begin{definition}[Regime quantities]
\label{def:regime}
The \emph{survival cliff} of policy $\Pi$ is
$\lambda^{*}(\Pi)=\min\{\lambda : \mathrm{surv}(\Pi)<\tfrac12\}$ on the
tested grid. If survival never falls below $1/2$, the boundary is
right-censored as $\lambda^{*}>\lambda_{\max}$, where $\lambda_{\max}$
is the largest tested scale. For a completed chain with
$f^{*}\neq0$, the \emph{quality endpoint} is the incumbent's
approximation-ratio gap
$\mathrm{ARG}(\Pi)=|\overline{f}(X_T)-f^{*}|/|f^{*}|$. Because $X_T$
may contain repaired bitstrings, ARG measures the complete hybrid runtime
and is not evidence of quantum origin. The \emph{optimum-sampling
endpoint} is
$\Pr[\exists\,\mathbf{y}\in M(\Pi)\cap F: f(\mathbf{y})=f^{*}]$,
which counts only raw measured bitstrings. Restricting this categorical
endpoint to $M(\Pi)$ prevents a decoder-constructed optimum from being
credited to the quantum computation.
\end{definition}

Section~\ref{sec:eval-setup} specifies the statistical tests and multiple-comparison corrections for these endpoints. 
Three noise scales then organize the deployment map:
\begin{equation}
\begin{gathered}
\lambda_q=\text{quality crossover while both policies are live},\\
\lambda_s=\lambda^{*}(\Pi_{\mathrm{P}})\ \text{(purification cliff)},\\
\lambda_r=\lambda^{*}(\Pi_{\syn{}})\ \text{(repair liveness)},
\end{gathered}
\label{eq:lamq}
\end{equation}
Here $\lambda_q$ is the first tested noise scale at which
$\Delta_{\mathrm{ARG}}=\mathrm{ARG}(\Pi_{\mathrm{P}})-
\mathrm{ARG}(\Pi_{\syn{}})>0$ with grid-wide significance and both
policies have survival at least $1/2$. If no tested scale satisfies
these conditions, $\lambda_q$ is undefined. The first two boundaries
select between purification and repair; $\lambda_r$ is the outer
liveness limit of repair itself. All three are empirical onsets on a
pre-specified grid, not closed-form predictions.

On the four KV scales with a significant quality crossover, the measured
ordering is $\lambda_q<\lambda_s<\lambda_r$, producing the four regimes
in Table~\ref{tab:regimes}. When $\lambda_q$ is undefined, regime~II is
absent: purification remains the default below $\lambda_s$, and liveness
becomes the selection criterion at $\lambda_s$. If
$\lambda_q\ge\lambda_s$, regime~II is likewise empty. Regime~IV marks
repair's own liveness boundary, not an optimization-reachability limit;
for some workloads, failure to reach the optimum can bind earlier.

Stage~3 estimates these boundaries by sweeping the noise grid; it is not
a pre-execution predictor. Estimating $\lambda_q$ or $\lambda_s$ for an
unseen instance remains open because constraint structure affects the
cliff at least as much as problem size (Figure~\ref{fig:e6}b). We report
undefined crossovers explicitly and apply the grid-wide correction in
every $\lambda_q$ comparison.

%% =====================================================================

\section{Evaluation}
\label{sec:eval}

We evaluate \syn{} through four questions. \textbf{Q1:} Does the static
profiler predict single-flip repairability across constraint families
(\S\ref{sec:eval-gen})? \textbf{Q2:} Do the end-to-end gains come from repair
rather than mere continuation or classical solving (\S\S\ref{sec:eval-core}--\ref{sec:eval-blind})? 
%first at small
% scale and then with an objective-blind decoder at 78--90 qubits
% (\S\S\ref{sec:eval-core}--\ref{sec:eval-blind})? 
\textbf{Q3:} Where should a
system select repair over purification, and how does that choice scale
(\S\ref{sec:eval-regime})? \textbf{Q4:} Do the locality premise, quality
gain, and liveness gain survive on real hardware, including when the decoder cannot access the objective (\S\ref{sec:eval-device})? 
% Section~\ref{sec:eval-ablate} tests the encoding and runtime choices that support these results.
% update with new experiment
% We also evaluate observable locality, end-to-end behavior, and decoding cost
% on the IBM Heron backend \texttt{ibm\_marrakesh}. Four campaigns consume
% 1{,}136\,s of QPU time: 544\,s and 282\,s for the two graph-aware
% campaigns, 144\,s for the three-arm objective-blind quality experiment,
% and 166\,s for the hard-edge tests and exploratory attempts.
% Figure~\ref{fig:e7} reports the graph-aware \syn{}$_G$ results;
% Figure~\ref{fig:e9} tests
% objective-blind \syn{}$_C$ on the same device.

\subsection{Experimental Setup}
\label{sec:eval-setup}

\textbf{Workloads.} We evaluate 109 instance packs spanning 15--120
qubits: 31 hard-edge $K{=}V$ GCP packs from Rasengan's published suite~\cite{rasengan},
16 soft-edge $K{=}3$ packs, four fixed-$K$ hard-edge packs, four
go/no-go packs, and 18 packs each for FLP~\cite{facility_fcp}, KPP~\cite{k_partition_kpp}, and SCP~\cite{set_cover_scp}. The static screen rejects one generated SCP instance
with a zero column in $C$; every retained instance passes. SCP has
median repair degeneracy $R{=}2$ and predicted recovery
$\ER{}\approx0.70$, whereas the other families have median $R{=}1$.
All arms use Rasengan's original chains, encodings, training procedure,
circuits, and angles; only the boundary runtime changes.

% \textbf{Execution model and budgets.} Simulation uses the exact
% factorized channels enabled by measurement between segments and the
% small active set of each transition. Under the independent local
% depolarizing-and-readout model, the active subcircuit is simulated as a
% density matrix and spectator qubits undergo independent readout flips.
% The noise multiplier \lam{} scales the base rates as
% \begin{align}
% p_{1q}(\lambda) &= \min\bigl(3.5{\times}10^{-4}\lambda,\; 0.5\bigr),
% \nonumber\\
% p_{2q}(\lambda) &= \min\bigl(8.75{\times}10^{-3}\lambda,\; 0.75\bigr),
% \nonumber\\
% p_{\mathrm{ro}}(\lambda) &= \min\bigl(10^{-2}\lambda,\; 0.5\bigr),
% \label{eq:noisecaps}
% \end{align}
% where $p_{1q}$ and $p_{2q}$ are the one- and two-qubit depolarizing
% parameters and $p_{\mathrm{ro}}$ is the per-qubit readout-flip
% probability; $\lambda=1$ approximates current-device rates. Every
% simulated comparison uses $S\in\{64,256\}$ shots per segment and 12
% paired seeds. All 85 analytically tractable packs pass noise-free
% propagation checks, and the other 24 pass channel-law tests. Up to 24
% qubits, the maximum deviation from whole-register noisy simulation is
% 0.05. Appendix~\ref{app:sim} gives the factorization identity, modeled
% omissions, and validation details.

\textbf{Execution model and budgets.}
Simulation uses factorized local depolarizing and readout channels
enabled by the small transition active sets. A noise multiplier
$\lambda$ scales the calibrated one-qubit, two-qubit, and readout
error rates; $\lambda=1$ approximates current-device rates. We use
$S\in\{64,256\}$ shots per segment and 12 paired seeds. Appendix~\ref{app:sim}
gives the complete channel definition and validation against
whole-register simulation.

% \textbf{Baselines and controls.} The deployable incumbent is
% purification, Rasengan's published boundary runtime~\cite{rasengan}.
% We also construct two stronger continuation baselines that activate
% only when purification returns no survivor. \emph{purify+restart}
% repopulates the register from the feasible seed, whereas
% \emph{purify+last} reuses the previous surviving register; both resume
% at the failed segment. They always complete the chain but introduce a
% substitution event ($s_t=1$, Eq.~\eqref{eq:subst}) whenever activated,
% on average 6.2 times in 26 segments at 24 qubits and 33.4 times in 95
% segments at scale.

\textbf{Baselines and controls.} Purification is the boundary policy in
Rasengan's original implementation~\cite{rasengan}. We introduce two
continuation baselines for this study to test whether replacing an empty
register is sufficient to recover performance. Both first apply purification.
If no feasible shot survives, \emph{purify+restart} initializes the next
segment from the original feasible seed, whereas \emph{purify+last} reuses
the previous feasible register. These fallbacks allow the chain to continue
but introduce an external substitution ($s_t=1$, Eq.~\eqref{eq:subst}).
\syn{} instead derives the next register from the current measurements
through bounded repair.

Two diagnostic controls separate repair from classical solving.
\emph{Coset-uniform} uses the same syndrome and lists $\Lambda(v)$ as
the block decoder but selects uniformly. On soft-edge GCP this removes
adjacency-aware selection; on hard-edge GCP the common feasibility
check still rejects edge violations, so the control isolates whether
edge rows guide candidate selection. The A0 control feeds pure noise to
the same decoder and measures how often the decoder can solve an
instance without circuit-generated structure. Uniform reinjection over
$F$ is an oracle control available only when $F$ is enumerable; it
upper-bounds uninformed continuation but is not deployable.

\textbf{Endpoints and statistical tests.} A0 demonstrates why
categorical endpoints must be measurement-level: when applied to random
bits, the decoder alone produces median register proper-coloring rates
of 0.49 at 24 qubits and 0.79 at 15 qubits. We therefore count a proper
coloring or sampled optimum only when it was measured, never when it
was created by repair. Approximation-ratio gap (ARG) remains the
register-level system metric of Definition~\ref{def:regime}. We report
per-instance medians over 12 paired seeds. Quality comparisons use the
paired Wilcoxon signed-rank test; optimum-sampling discordance uses a
paired binomial test. When chain death leaves ARG undefined, a test
includes only seeds for which both arms are defined, and we report the
pair count when it falls below 12. Attribution tests compare \syn{}
against the five controls within each cell and apply Holm correction
within that planned family. Purification-versus-\syn{} tests answer the
separate deployment question and apply Holm correction across the
tested \lam{} grid.

\begin{figure}[t]
  \centering
  \includegraphics[width=\linewidth]{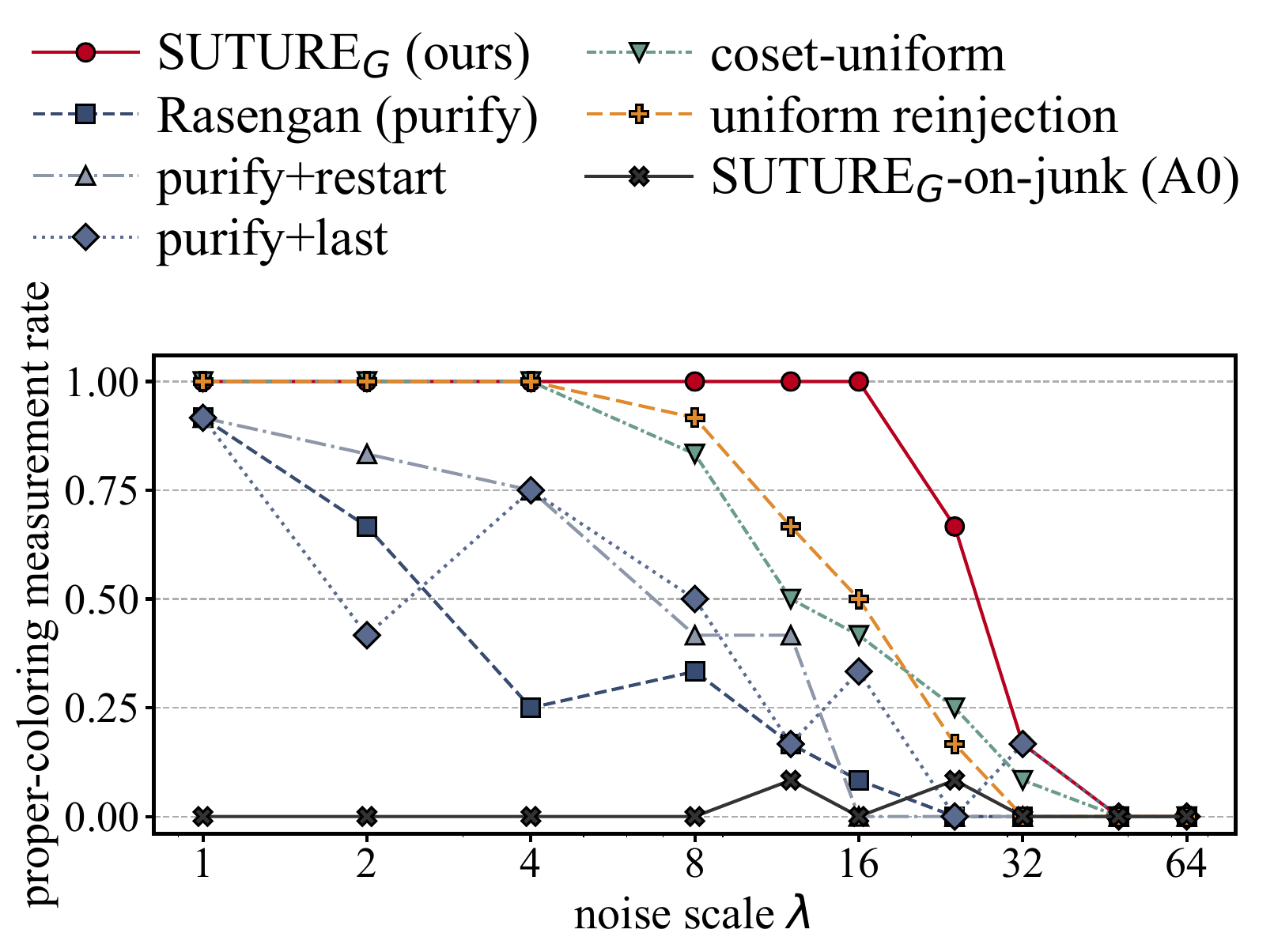}
  \caption{Go/no-go attribution test on 24-qubit GCP (\(S=64\)). Proper-coloring measurement rate across \syn{}\(_G\), purification, fallback, ablation, oracle, and A0 control arms.}
  \label{fig:e1}
\end{figure}

\subsection{Q1: Does the static profiler predict repairability?}
\label{sec:eval-gen}

Figure~\ref{fig:e6}a tests the profiler's single-flip accuracy law
across 70 instances from four constraint families. Measured recovery
differs from the predicted \ER{} by at most 0.011. The profiler also
distinguishes the harder SCP family, where slack-column collisions
reduce \ER{}, and detects the rejected generator defect: its zero
column predicts the observed 12.5\% silent-flip rate. These results
verify both the theorem's implementation and the profiler's ability to
identify an unsuitable constraint encoding before execution.

This experiment deliberately tests planted single-bit corruptions; we
do not claim that device errors are predominantly single-bit. The
separate locality premise A2 states only that measured violations
usually involve few syndrome-active blocks, making the bounded block
search computationally useful. A multi-bit corruption may still
activate only one or two blocks. We test A2 under calibrated noise and
on hardware in Section~\ref{sec:eval-device}; the chain experiments
below determine whether the single-flip profile remains informative
under the full multi-flip distribution. Thus Figure~\ref{fig:e6}a
validates D1--D3 as a static repairability characterization, while the
remaining questions test its end-to-end value.

\begin{figure}[t]
  \centering
  \includegraphics[width=\linewidth]{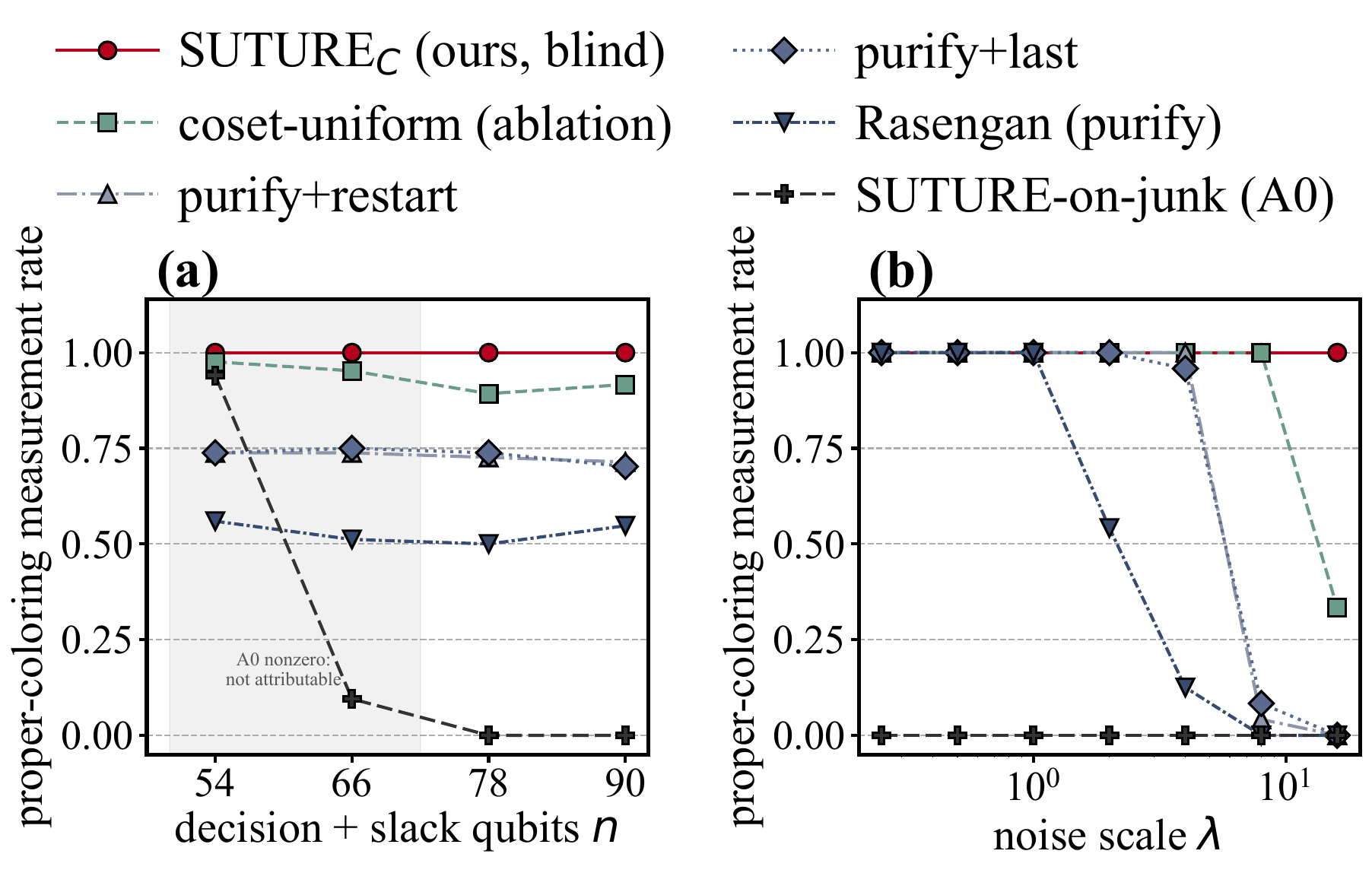}
  \caption{Objective-blind hard-edge ladder (54–90q). (a) Proper-coloring performance; shaded sizes fail the A0 contamination screen. (b) At contamination-screened 78–90q, \syn{}\(_C\) separates from purification and its fallbacks in Holm-corrected post-cliff cells; the coset-uniform ablation separates only at the highest noise.}
  \label{fig:e8}
\end{figure}

\subsection{Q2: Is repair responsible for the gain?}
\label{sec:eval-core}

We begin with a 24-qubit soft-edge GCP instance at $S=64$ shots per
segment (Figure~\ref{fig:e1}). Here \syn{}$_G$ uses adjacency-aware selection, whereas the
coset-uniform control uses the same syndrome and candidate lists
$\Lambda(v)$ (Eq.~\eqref{eq:cosetlist}) but samples uniformly. Their
difference therefore isolates the selection rule rather than syndrome
localization.

\syn{}$_G$ maintains a proper-coloring measurement rate of 1.0 through
$\lambda=16$, where purification has fallen to 0.08; purification dies
by $\lambda=24$. At $\lambda=16$, \syn{}$_G$
outperforms each of the five attribution controls---purify+restart,
purify+last, uniform reinjection, coset-uniform repair, and A0---with
paired $p<0.05$ after Holm correction. Because $\lambda=16$ was the
first death-window cell identified by the same sweep rather than a
pre-specified cell, this result is descriptive. 
% The planned,
% grid-corrected attribution test is the scale ladder in
% Section~\ref{sec:eval-blind}.
Section~\ref{sec:eval-blind} separately evaluates objective-blind repair
using the hard-edge encoding and grid-corrected comparisons.

The controls explain the separation. Both continuation baselines
perform worst despite completing every chain, showing that completion
alone is insufficient. A0 produces a proper coloring in only 1 of 12
seeds at $\lambda=12$ and $\lambda=24$, and in none of the other eight
noise cells. This rate is an order of magnitude below \syn{}$_G$ on
circuit output, so adjacency-aware search cannot reproduce the result
without circuit-generated structure to repair.

The same conclusion holds when edge constraints are included in $C$.
A separate 54-qubit hard-edge experiment evaluates best measured objective cost, objective-blind \syn{}$_C$ retains a median best measured cost of 29 after purification collapses at
$\lambda=2$; every realizable baseline has median cost 39
($p=0.0005$). Uniform reinjection is omitted because enumerating $F$
would make the control unrealizable at this scale.

\begin{figure*}[t]
  \centering
  \includegraphics[width=\textwidth]{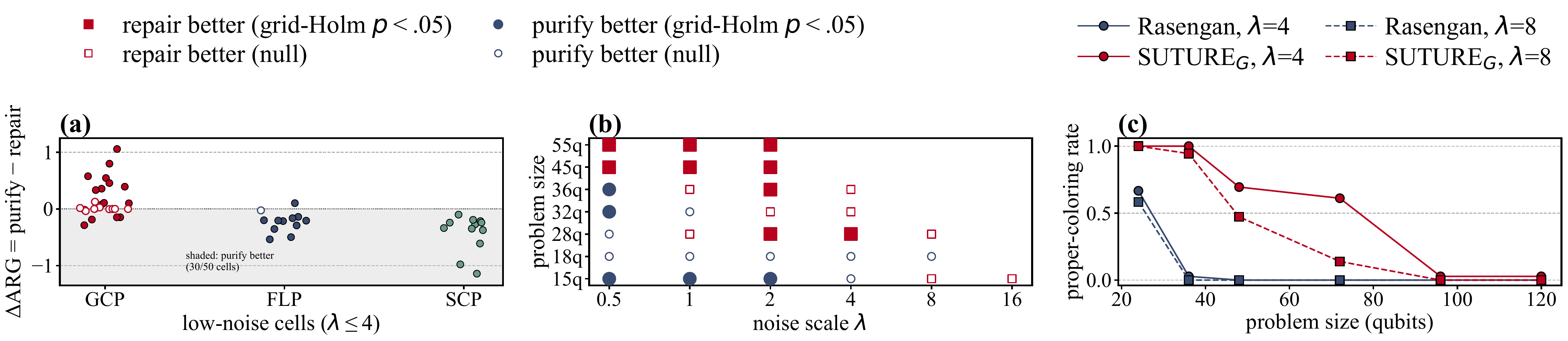}
  \caption{Deployment and scale. (a) Low-noise ARG margins. (b) Grid-corrected quality crossover \(\lambda_q\) on the KV suite. (c) Soft-edge proper-coloring performance of \syn{}\(_G\) versus purification through 120 qubits.}
  \label{fig:q3}
\end{figure*}

\subsection{Q2 continued: Objective-blind attribution}
\label{sec:eval-blind}

The preceding soft-edge experiment uses graph adjacency. To remove that
source of objective information, we construct a hard-edge ladder with
$v\in\{8,10,12,14\}$, $\chi=3$, and 54--90 qubits including slack.
Edge rows now belong to $C$, while the random vertex--color objective
remains hidden from every boundary runtime. All arms share the same
pre-trained chain and COBYLA training protocol. Transition active sets
remain at 6--8 qubits as $v$ grows, so channel-extraction cost scales
with chain length rather than $|F|$. Because $F$ is exactly the set of
proper colorings, any repair that leaves an edge conflict returns
$\bot$.

A0 determines which scales support attribution. When fed pure noise,
the decoder produces a proper coloring in 94\% of runs at 54 qubits and
10\% at 66 qubits; these scales are contaminated and excluded. At 78
and 90 qubits, A0 produces no proper coloring in 168 runs across two
sizes and seven noise levels, corresponding to a pooled one-sided 95\%
upper bound of 1.8\%. Only these two scales carry attribution evidence
(Figure~\ref{fig:e8}a). This screen is post hoc: A0 was included from
the start, but the zero-observed-success rule was adopted after
examining it. We therefore describe the result as
contamination-screened rather than confirmatory.

At 78 and 90 qubits, \syn{}$_C$ measures a proper coloring in every
cell and has no paired-seed loss to any arm. The 190 raw discordances
are descriptive because seeds recur across the \lam{} grid. The
pre-specified per-cell tests instead treat each size and comparison arm
as one Holm family over seven noise levels. At both sizes, \syn{}$_C$
separates from purification at $\lambda\in\{4,8,16\}$ and from both
continuation baselines at $\lambda\in\{8,16\}$. Aggregating each seed
once over the post-cliff interval $\lambda\in\{4,8,16\}$ gives 24 wins
in 24 seed--instance pairs against purification and each continuation
baseline ($p=1.2{\times}10^{-7}$); this aggregation is secondary to the
per-cell tests.

The coset-uniform comparison is narrower. Across the same 24
seed--instance pairs, \syn{}$_C$ records 16 wins, no losses, and eight
ties ($p=3.1{\times}10^{-5}$). Per-cell separation appears only at
$\lambda=16$: it is significant at 78 qubits
($p_{\mathrm{Holm}}=0.027$) but not at 90 qubits
($p_{\mathrm{Holm}}=0.11$). We therefore conclude that objective-blind
repair outperforms purification and its continuation baselines at both
uncontaminated scales, but do not claim that constraint-guided
candidate selection improves monotonically with size. Consistent with
the regime map, all circuit-fed arms are indistinguishable at
$\lambda\le1$, purification begins losing at $\lambda=2$, and
grid-corrected separation begins when the post-cliff region opens at
$\lambda=4$.

\begin{figure*}[t]
  \centering
  \includegraphics[width=\textwidth]{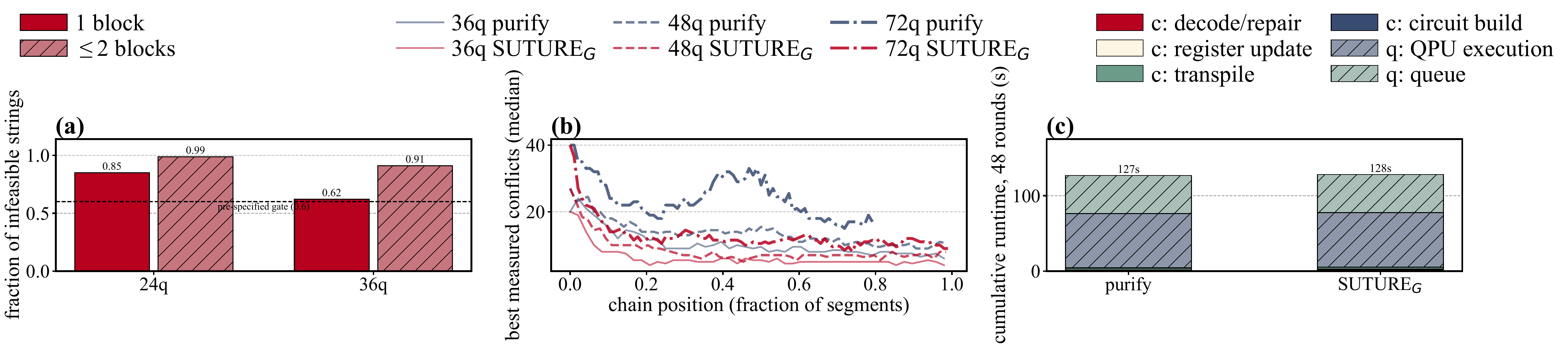}
  \caption{IBM Heron hardware results. (a) Observable block locality at the 24q/36q probe sizes. (b) End-to-end quality and liveness at 36–72q; at 72q and \(S=64\), 11/12 purification chains die while \syn{}\(_G\) completes all runs. (c) 36q timing: repair adds 1.25 s, \(<1\%\) of pipeline latency.}
  \label{fig:e7}
\end{figure*}

\subsection{Q3: When should repair be deployed, and does it scale?}
\label{sec:eval-regime}

Repair should not replace purification uniformly. The relevant decision
is where the quality crossover $\lambda_q$ and liveness boundary
$\lambda_s$ place a workload on the regime map of
Section~\ref{sec:method-death}.

\textbf{Below the crossover, purification is the safer default.}
When both arms survive at $\lambda\le4$, purification has lower median
ARG in 30 of 50 non-degenerate cells across GCP, FLP, and SCP; 26 are
statistically significant, with margins as large as $-0.71$
(Figure~\ref{fig:q3}a). Repair leads in 15 cells, including 12
significant cells at larger scales whose crossover already lies in this
band, and five cells tie. Repair removes purification's discard filter,
so a low-noise quality penalty is expected and must remain part of the
deployment rule.

\textbf{The crossover moves earlier with scale.}
On the paper-fidelity KV suite, repair becomes preferable at 28 qubits
from $\lambda=2$ (ARG margin $+0.11$ to $+0.58$, grid-Holm
$p\le10^{-4}$), at 36 qubits from $\lambda=2$ ($+0.10$,
$p=4{\times}10^{-4}$), and at 45--55 qubits from $\lambda=0.5$
($+0.33$ to $+1.06$, $p<10^{-3}$; Figure~\ref{fig:q3}b). At 32
qubits, the largest observed margin is $+0.55$ at $\lambda=4$, but it
does not survive grid correction (within-cell $p=0.03$,
$p_{\mathrm{Holm}}=0.09$); no $\lambda_q$ is therefore defined on that
grid. At 55 qubits purification never samples an optimum, even at
$\lambda=0$, whereas repair reaches optimum-sampling rates of
0.83--1.0 for $\lambda=0.5$--$4$.

\textbf{The liveness gain extends across families and to 120 qubits.}
On the fixed-$K$ soft-edge ladder, neither purification nor a
reinjection baseline measures a proper coloring from 48 qubits onward,
whereas \syn{}$_G$ continues through 120 qubits
(Figure~\ref{fig:q3}c). A0 succeeds in 0.019 of runs at 24 qubits and
never at 36 qubits or above; coset-uniform repair also reaches zero from
48 qubits. Thus adjacency-aware selection carries this particular
soft-edge result beyond 48 qubits. In contrast, the KV and cross-family
ARG and survival results use objective-blind \syn{}$_C$. Across the 13
scales from four families with a post-cliff window, the peak-discordance
cell contains no repair loss (Figure~\ref{fig:e6}c). This is one
predefined summary cell per window, not a claim about every cell.

\textbf{More shots postpone, but do not remove, purification death.}
On the 24-qubit go/no-go instance, increasing the segment budget from
$S=64$ to $S=256$ moves purification's death from $\lambda=12$ to
$\lambda=24$. Over the same budgets, \syn{}$_G$ measures proper
colorings through $\lambda=16$ and $\lambda=24$, respectively, and
retains the larger operating envelope. The shift follows
Eq.~\eqref{eq:cliffprob}: additional shots buy liveness linearly rather
than changing the underlying survival mechanism.

% \textbf{Secondary mechanism observation.}
% Optimum sampling under repair is non-monotone in noise. For an
% undertrained chain, moderate noise supplies moves outside the trained
% trajectory and repair restores feasibility; the effect persists for
% training budgets from 0 to 3000 iterations. Because this sweep varies
% only training budget and reuses the go/no-go and ladder controls, we
% treat the result as a mechanism observation rather than a primary
% claim.

\textbf{Repair also has an outer boundary.}
The only repair deaths in the full sweep occur for 64-qubit
$k$-partition at $\lambda\in\{32,64\}$: survival is 1.0 through
$\lambda=16$ and zero for every seed at the two higher scales. Here the
minimum repair weight exceeds $w_{\max}$, so the generic decoder returns
$\bot$ instead of searching an exponential space. This observed
$\lambda_r$ is a fixed-budget limit, not a tuned failure point; each
campaign uses the same decoder budget in every cell.

% \subsection{Design ablations}
% \label{sec:eval-ablate}

% The encoding ablation tests whether saving qubits is worth weakening
% the syndrome. Figure~\ref{fig:e5}, domain-wall encoding reduces a representative instance
% from 24 to 16 qubits, but four of its six possible single-bit flips
% move between valid codewords and are therefore silent. Under the
% i.i.d.\ flip channel, its measured detectable-flip rate is 0.33,
% compared with 0.94 for one-hot encoding. End to end, domain-wall
% \syn{}$_G$ reaches proper-coloring rates of 0.46--0.54, statistically
% indistinguishable from domain-wall purification at approximately 0.5.
% One-hot \syn{}$_G$, by contrast, reaches 0.92 where one-hot purification
% reaches zero. This result supports D1's design rule: repairability
% depends on syndrome visibility, not qubit count alone.

% % Two runtime knobs produce null results at the tested power of 12 seeds
% % and $v\le24$. Transition reordering yields paired
% % $p=0.75$--$0.90$, while the ensemble-prior tie-break yields $p=0.29$
% % and $0.45$. We therefore retain neither modification in the reported
% % system.

\subsection{Q4: Does it survive real hardware?}
\label{sec:eval-device}

\begin{figure}[t]
  \centering
  \includegraphics[width=\linewidth]{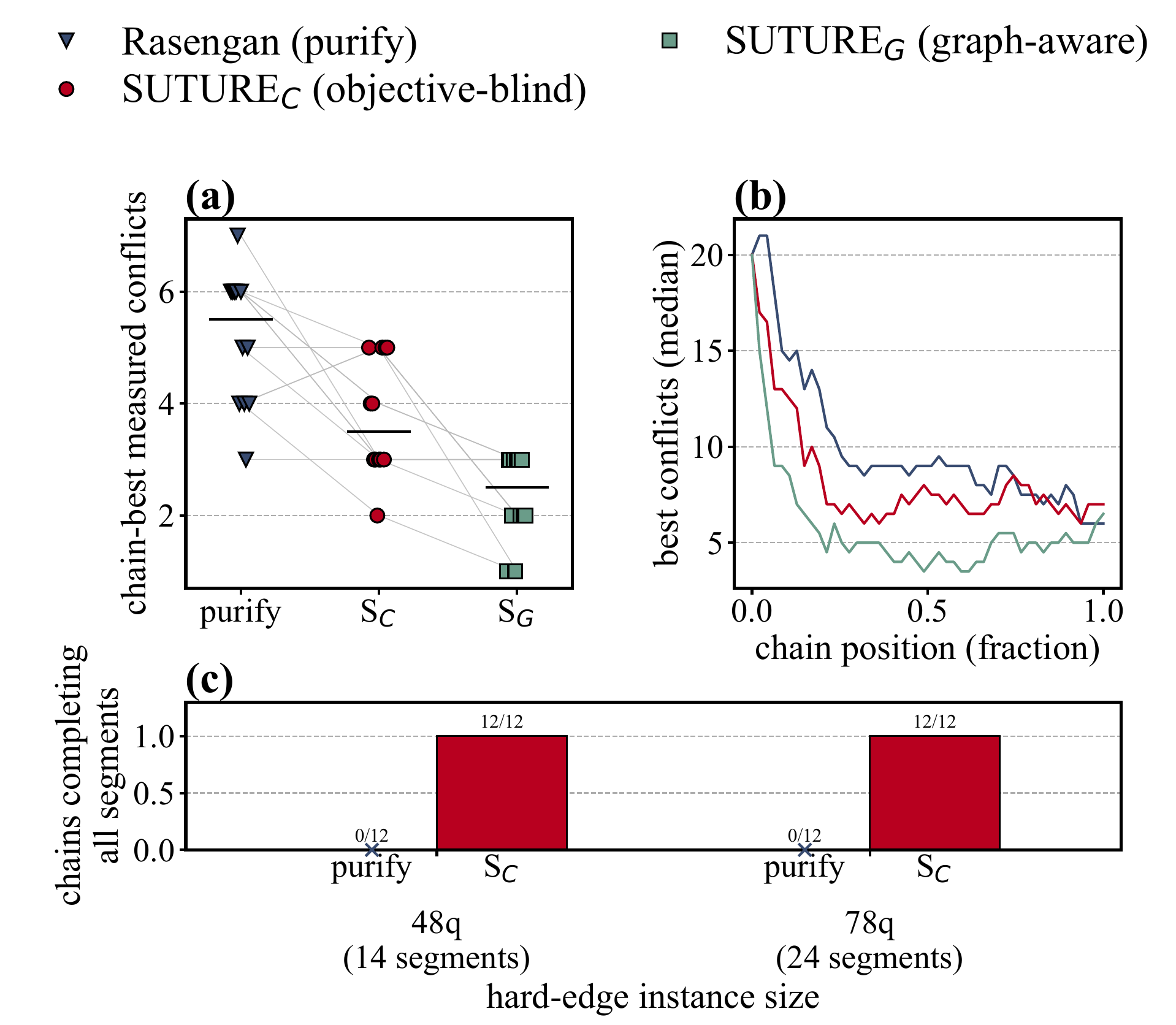}
  \caption{Objective-blind repair on IBM Heron. \textbf{(a,b)} At 36 qubits, all arms complete 48 segments.
  \textbf{(c)} At 48 and 78 qubits, no feasible bitstring is measured:
  purification terminates at the first boundary, whereas \syn{}$_C$
  completes every chain. 
  % Panel (c) reports liveness, not quality.
  }
  % \Description{Three-panel hardware comparison. Panels (a) and (b) show
  % lower measured conflict counts for objective-blind and graph-aware SUTURE
  % than for purification at 36 qubits. Panel (c) shows zero of twelve
  % purification chains and twelve of twelve objective-blind SUTURE chains
  % completing the hard-edge workloads at 48 and 78 qubits.}
  \label{fig:e9}
\end{figure}

We evaluate observable locality, end-to-end behavior, and decoding cost
on the IBM Heron backend \texttt{ibm\_marrakesh}~\cite{IBMQuantum2026}. Four campaigns consume
1{,}136\,s of QPU time: 544\,s and 282\,s for the two graph-aware
campaigns, 144\,s for the three-arm objective-blind quality experiment,
and 166\,s for the hard-edge tests and exploratory attempts.
% We evaluate the locality premise and end-to-end regime map on the IBM
% Heron backend \texttt{ibm\_marrakesh}~\cite{IBMQuantum2026}. 
% The two pre-specified campaign
% allowances consume 826\,s of QPU time (544\,s and 282\,s).
Figure~\ref{fig:e7} reports the graph-aware \syn{}$_G$ results;
Figure~\ref{fig:e9} tests
objective-blind \syn{}$_C$ on the same device.
% These soft-edge experiments
% use the same graph-aware \syn{}$_G$ implementation as simulation,
% including adjacency access; the objective-blind attribution result
% remains the separate hard-edge ladder of
% Section~\ref{sec:eval-blind}.

\textbf{Observable-locality gate.}
The probe samples trained-chain segments from known feasible inputs and
counts syndrome-active blocks directly from each infeasible
measurement. At 24 and 36 qubits, respectively, 85\% and 62\% of
infeasible bitstrings activate exactly one block, while 99\% and 91\%
activate at most two (Figure~\ref{fig:e7}a). Both sizes pass the
pre-specified 0.6 gate on the single-block fraction. At 72 qubits, the
fractions fall to 28\% for exactly one block and 62\% for at most two;
the probe-size gate is not met. Consequently, the 72-qubit experiment
tests the full multi-block decoder rather than the single-block fast
path. This measurement validates only the observable
$c(\mathbf{y})$ used by the search; it does not reconstruct latent
physical-error support.

\textbf{Graph-aware quality and liveness.}
Before the death boundary, \syn{}$_G$ wins the paired register-quality
comparison in 10 of 12 seeds at 36 qubits ($p=0.011$) and all 12 seeds
at 48 qubits ($p=0.0005$). At 72 qubits and $S=64$, 11 of 12
purification chains die between segments 7 and 84 (median 76), whereas
\syn{}$_G$ completes all 96 segments in every seed (discordance 11/0,
$p=0.00098$). At its most difficult rounds, the decoder repairs a
median 59 of 64 shots even when the measured feasible fraction reaches
zero. All seeds remain scoreable for measured conflict count:
\syn{}$_G$ achieves a median chain-best value of 5 (range 3--8), versus
15 for purification (range 12--32), and wins all 12 pairs
($p=0.0005$).

The lower shot budget exposes the same survival mechanism more sharply.
At $S=32$, purification's median death moves to segment 4; one third of
seeds never measure a feasible bitstring, and every purification chain is
dead by segment 24, while \syn{}$_G$ completes the chain (discordance
12/0, $p=0.00049$). Thus the device data exhibit the $1/S$ dependence
predicted by Eq.~\eqref{eq:cliffprob}. Neither runtime measures a proper
coloring at 24 vertices, so optimum-sampling claims remain
simulation-scoped.

\textbf{Runtime cost.}
Over the 48-round, 36-qubit chain, decoding adds 1.25\,s
(Figure~\ref{fig:e7}c). This is less than 1\% of the 128\,s pipeline
latency including queueing. Excluding the variable 50.5\,s queue delay,
decoding is 1.6\% of the 77.6\,s execution path, which is dominated by
72\,s of QPU time. In this campaign, in-loop repair is therefore small
relative to quantum execution while providing the observed liveness
gain.

\textbf{Objective-blind quality on hardware.}
Figures~\ref{fig:e9}a and~\ref{fig:e9}b, we additionally run purification, \syn{}$_C$, and \syn{}$_G$ in the
same 36-qubit hardware jobs, with interleaved seeds to share
calibration drift. All arms complete 48 segments. Median chain-best
measured conflicts are 5.5, 3.5, and 2.5, respectively; both repair
variants significantly improve over purification after Holm
correction. Thus \syn{}$_C$ establishes an objective-blind hardware
gain, while \syn{}$_G$ quantifies the additional benefit of
graph-aware selection.

\textbf{Objective-blind hard-edge liveness.}
The soft-edge experiment establishes objective-blind quality, but its
constraint matrix does not include edge rows. We therefore also test
hard-edge \syn{}$_C$, for which edge feasibility is part of the
syndrome rather than an objective query.
% This encoding makes each transition much deeper: multi-controlled logic
% over an eight-qubit active set transpiles to 842 two-qubit gates at depth
% 2{,}149 on the reported pinned layout, compared with two two-qubit gates
% for a soft-edge transition. 
We run 12 paired seeds with $S=64$ at 48 qubits for 14 segments and at 78 qubits for 24 segments.
Neither circuit-fed arm measures a feasible bitstring in either cell:
0 of 11{,}520 device-measured shots at 48 qubits and 0 of 19{,}200 at
78 qubits. Purification consequently terminates at the first boundary
for all 12 seeds, and no measurement-level quality endpoint is defined.
\syn{}$_C$ nevertheless completes every segment for all 12 seeds at
both sizes (survival discordance 12/0; exact paired sign test
$p=0.00049$ at each size). At 78 qubits, it repairs a median 25 of 64
shots per round. Thus, bounded objective-blind repair sustains execution
when discard-based enforcement has no feasible measurement to retain.
Figure~\ref{fig:e9}c summarizes the two cells. This is strictly a liveness
result: we draw no quality or optimum-sampling claim from either hard-edge
hardware cell.

\section{Related Work}

% \textbf{Constraint repair and subspace recovery.}
Classical constraint-repair methods use violated constraints to guide
local search for constraint-satisfaction problems (CSP)~\cite{minton1992minconflicts, Hoos2000}. SUTURE shares the general intuition for a different
systems role: bounded per-shot reconstruction at intermediate VQA
boundaries, with non-suspect variables pinned and repaired states
reinserted before subsequent execution. Symmetry verification
discards samples outside a target sector~\cite{symverify,mcardle-symmetry}, while
sample-based quantum diagonalization projects configurations toward a
target subspace~\cite{sqd, sqd-codespace}. SUTURE instead uses the problem constraints
as a syndrome and performs recovery inside a continuing segmented
execution.
%
% \textbf{Terminal repair and error mitigation.}
Terminal methods, including REGRID-QAOA, Ising-machine repair,
annealing chain repair, and hybrid objective heuristics
~\cite{regrid-qaoa, ising-bench, chainbreak-qst,qac-me,  dupont-greedy, dqi}, can improve final samples but cannot sustain a segmented
chain once an intermediate boundary loses all feasible measurements.
The core \syn{}$_C$ path is additionally objective-blind. Readout
mitigation, noise-adaptive compilation, and zero-noise extrapolation
~\cite{nation-readout,bravyi-readout, tannu-bias, zne} improve the physical channel, output distribution, or
estimated observable and are complementary to SUTURE.
%
% \textbf{Relation to QEC.}
Quantum error correction uses engineered redundancy and physical
syndromes to protect logical state~\cite{fowler2012surface,google2024decoder}. SUTURE is not QEC:
it adds no code qubits and instead reuses redundancy already present
in optimization constraints to recover classical states at
measurement boundaries.

\section{Conclusion}
\label{sec:conclusion}

% Segmented feasibility-preserving VQAs currently handle noisy,
% infeasible measurements by discarding them. We show that the same
% constraint matrix used to preserve ideal dynamics can also serve as a
% runtime recovery interface. \syn{} combines a static repairability
% profile, an in-loop syndrome decoder, and a regime map that separates
% quality and liveness decisions. The profiler's single-flip prediction
% holds across four constraint families; objective-blind \syn{}$_C$
% remains effective through 90 qubits in contamination-screened
% post-cliff tests; graph-aware \syn{}$_G$ extends the soft-edge study to
% 120 qubits; and the quality and liveness regimes persist on IBM Heron
% hardware through 72 qubits. In the 36-qubit timing experiment, decoding
% adds less than 1\% to total pipeline latency.
% The result is not a blanket replacement for purification. When a
% quality crossover is measured, discard remains the conservative choice
% below it and repair improves quality above it. Once purification reaches
% its survival cliff, repair preserves execution until it reaches its own
% fixed budget boundary. The systems implication is that problem
% constraints need not remain only compiler input: at measurement
% boundaries, they can become runtime metadata for recovering and
% continuing noisy constrained computations.

\syn{} turns problem constraints into runtime metadata for bounded
repair and reinjection at segmented VQA boundaries. Across simulation
and IBM hardware, it sustains execution beyond purification's
survival limit while improving measured quality in the demonstrated
regimes. Its benefit remains encoding-, decoder-, and noise-dependent,
providing a practical but explicitly bounded recovery interface for
noisy segmented quantum execution.

\section*{Acknowledgement}
This research was supported in part by Institute for Information \& communications Technology Planning \& Evaluation (IITP) grant (No. RS-2020-II200014, A Technology Development of Quantum OS for Fault-tolerant Logical Qubit Computing Environment) and in part by Quantum Science and Technology Flagship Project (Quantum Computing) (No. RS-2025-25464760) through the National Research Foundation of Korea(NRF), funded by the Korean government (Ministry of Science and ICT(MSIT)).

\bibliographystyle{ACM-Reference-Format}
% \bibliography{references}
%%% -*-BibTeX-*-
%%% Do NOT edit. File created by BibTeX with style
%%% ACM-Reference-Format-Journals [18-Jan-2012].

\appendix

\section{Proofs of Propositions~\ref{prop:detect}--\ref{prop:accuracy}}
\label{app:proofs}

All three are direct computations on the syndrome map
$\sigma(\mathbf{y})=\{r:(C\mathbf{y})_r\neq b_r\}$ of
Eq.~\eqref{eq:syndrome}.

\begin{proof}[Proof of Proposition~\ref{prop:detect} (Detection)]
Let $\mathbf{x}\in F$ and $\mathbf{y}=\mathbf{x}\oplus\mathbf{e}_j$.
For any row $r$,
$(C\mathbf{y})_r-(C\mathbf{x})_r = C_{rj}(1-2x_j)$, which is nonzero
iff $C_{rj}\neq 0$. Since $C\mathbf{x}=\mathbf{b}$, row $r$ is violated
by $\mathbf{y}$ iff $r\in\mathrm{supp}(C_j)$; hence
$\sigma(\mathbf{y})=\mathrm{supp}(C_j)$ exactly. If no column of $C$ is
zero this is nonempty for every $j$, so no single flip is silent and
$\delta_{\mathrm{silent}}=0$. (Domain-wall GCP violates the premise ---
adjacent codewords sit at Hamming distance 1 --- which is why its
single flips are silent, Appendix~\ref{app:eval-ablate}.)
\end{proof}

\begin{proof}[Proof of Proposition~\ref{prop:localize} (Localization)]
By the previous computation $\sigma(\mathbf{y})=\mathrm{supp}(C_j)$,
so the suspect set is
$L(\sigma(\mathbf{y}))=\bigcup_{r\in\mathrm{supp}(C_j)}\mathrm{supp}(C_r)$, which
contains $j$ (every row in $\mathrm{supp}(C_j)$ has $j$ in its
support). Its size is at most
$|\mathrm{supp}(C_j)|\cdot\max_r|\mathrm{supp}(C_r)|$ --- column degree
times maximum row width, a static property of $C$. Localization
degrades exactly when a variable sits in wide rows: a
cardinality-style row of width $w$ contributes $w$ suspects.
\end{proof}

\begin{proof}[Proof of Proposition~\ref{prop:accuracy} (Accuracy law)]
The true flip is always a restoring candidate: flipping $j$ again
returns $\mathbf{x}\in F$, and $j\in L(\sigma(\mathbf{y}))$ by
Proposition~\ref{prop:localize}. The decoder draws uniformly from the
$R(\mathbf{y})$ restoring candidates
(Eq.~\eqref{eq:d1}, D3), so conditional on the corrupted bitstring it
recovers the planted flip with probability exactly $1/R(\mathbf{y})$;
taking the expectation over the planting distribution gives
$\Pr[\text{exact recovery}]=\ER{}$. A competing candidate $j'\neq j$
restores feasibility iff its residual cancels the corruption's:
$C\mathbf{y}-\mathbf{b}=s_jC_j$ with $s_j=1-2x_j$, and flipping $j'$
adds $s'_{j'}C_{j'}$ with $s'_{j'}=1-2y_{j'}$, so $j'$ restores iff
$s'_{j'}C_{j'}=-s_jC_j$ --- a \emph{signed column collision}. For the nonnegative
one-hot assignment rows this reduces to $C_{j'}=C_j$ with the opposite
flip direction (the weight-0/weight-2 coset cases of one-hot GCP are the
canonical instance); the signed form is the one that holds in general,
and our facility-location and set-cover systems do carry $-1$
coefficients. Thus $R-1$
counts signed column collisions within the suspect set, and the law is
exact --- validated empirically to max deviation 0.011 across 70
instances (Figure~\ref{fig:e6}a).
\end{proof}

\section{Factorized Simulation: Validity and Checks}
\label{app:sim}
Simulation uses the exact
factorized channels enabled by measurement between segments and the
small active set of each transition. Under the independent local
depolarizing-and-readout model, the active subcircuit is simulated as a
density matrix and spectator qubits undergo independent readout flips.
The noise multiplier \lam{} scales the base rates as
\begin{align}
p_{1q}(\lambda) &= \min\bigl(3.5{\times}10^{-4}\lambda,\; 0.5\bigr),
\nonumber\\
p_{2q}(\lambda) &= \min\bigl(8.75{\times}10^{-3}\lambda,\; 0.75\bigr),
\nonumber\\
p_{\mathrm{ro}}(\lambda) &= \min\bigl(10^{-2}\lambda,\; 0.5\bigr),
\label{eq:noisecaps}
\end{align}
where $p_{1q}$ and $p_{2q}$ are the one- and two-qubit depolarizing
parameters and $p_{\mathrm{ro}}$ is the per-qubit readout-flip
probability; $\lambda=1$ approximates current-device rates.

\begin{figure}[t]
  \centering
  \includegraphics[width=\linewidth]{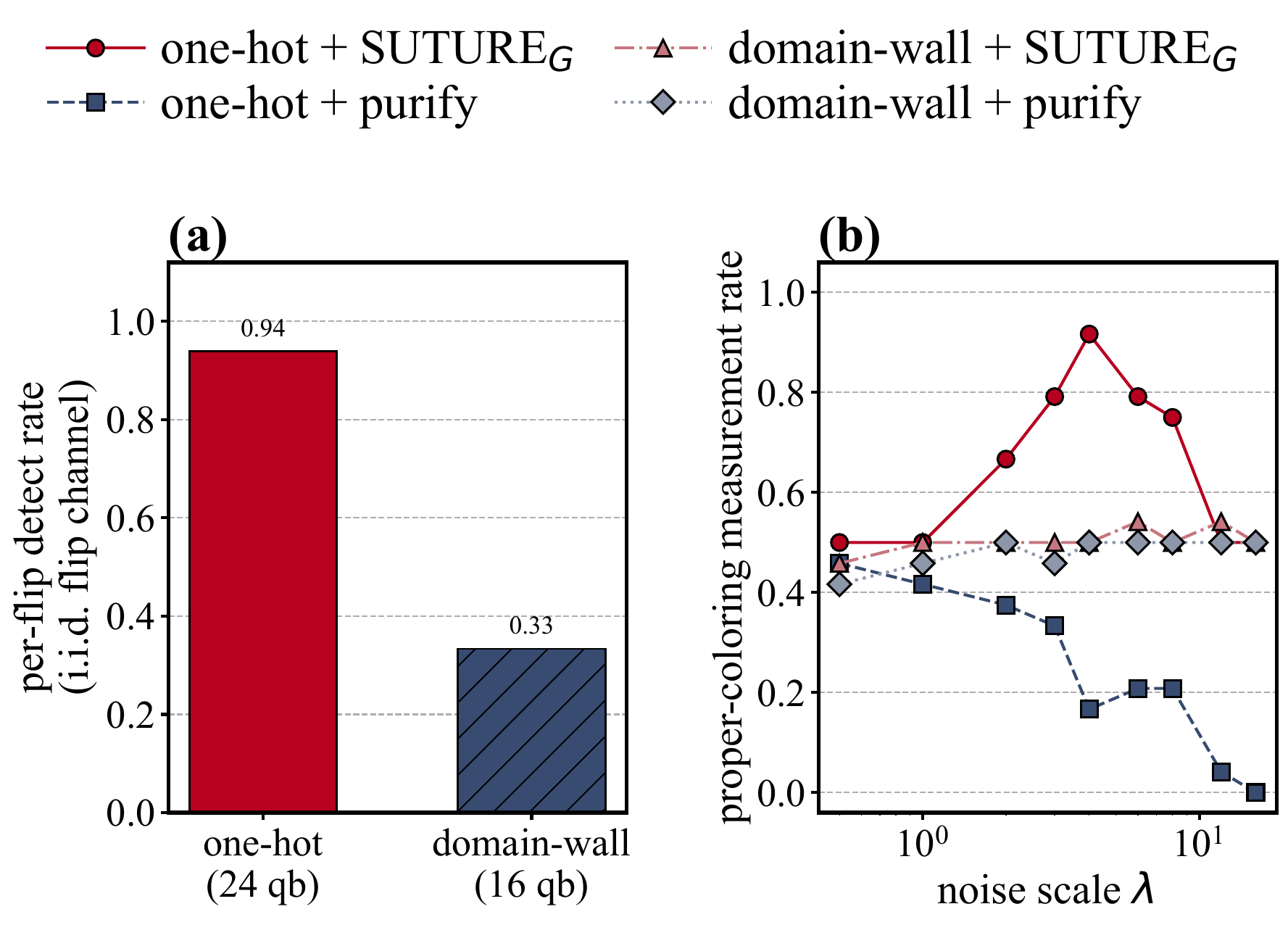}
  \caption{Encoding co-design. (a) Domain-wall reduces qubit count from \(K\) to \(K-1\) per vertex but lowers syndrome detectability. (b) End-to-end, repair provides no gain under domain-wall encoding, while one-hot encoding retains a clear repair advantage.}
  \label{fig:e5}
\end{figure}

Each segment begins from a measured computational-basis bitstring, and its
transition circuit touches only a small active set \(A\). Under the
independent local noise model of Section~\ref{sec:eval-setup}, the
segment output therefore factorizes as
\begin{equation}
P(\mathbf{y}\mid\mathbf{x}) =
T_t(\mathbf{y}_A\mid\mathbf{x}_A)
\prod_{i\notin A}R_i(y_i\mid x_i),
\label{eq:factorize}
\end{equation}
where \(T_t\) is the noisy channel of segment \(t\)'s active subcircuit
and \(R_i\) is the readout-confusion channel for spectator qubit \(i\).
Intermediate measurement is essential: it removes quantum correlations
at every segment boundary, while circuit locality keeps \(T_t\) small.
Equation~\eqref{eq:factorize} is consequently exact under the stated
noise model, rather than an approximation to a full-register state.

The model applies gate depolarization to the active subcircuit and
independent readout flips to every qubit. Explicit \(X\) gates used to
prepare active bits are included in the noisy subcircuit. Reset error,
spectator preparation and idle error, crosstalk, and correlated readout
are not modeled. These are omissions of the noise model, not errors
introduced by the factorization; the hardware campaign tests their
aggregate effect on the observable locality and regime claims.

The implementation first simulates each 4--12-qubit active subcircuit
and stores its channel \(T_t\). It then represents the full register as
classical \(n\)-bit strings, samples active bits from \(T_t\), samples
spectators from the \(R_i\), and applies the selected boundary runtime.
The quantum cost is exponential only in \(|A|\); the full-chain cost is
linear in shots and segments with polynomial per-shot decoder work. The
hard-edge $K{=}V$ suite stops at 55 qubits because its largest active
channels require 12-qubit density matrices and a roughly 0.8\,GB
channel table.
The soft-edge transitions always use two active qubits, allowing the
same procedure to reach 120-qubit problem instances without constructing
a \(2^{120}\)-dimensional state.

We validate both the channel extraction and the sampled chain. At
\(\lambda=0\), all 85 packs for which exact analytic propagation is
tractable retain 100\% feasibility and satisfy the pre-specified ARG
tolerance; sampled-to-analytic ARG deviations are at most 0.009 on the
KV suites and 0.032 on the cross-family packs. The remaining 24 packs
pass tests for row stochasticity, noise-free unitarity, and null-space
validity; the hard-edge ladder also retains feasibility in all 12 seeds
at \(\lambda=0\). Finally, for instances up to 24 qubits, the
factorized sampler and whole-register noisy simulation produce
feasible-fraction medians of 0.68 versus 0.725 at 15 qubits and 0.559
versus 0.525 at 24 qubits. Their maximum observed deviation is 0.05,
traceable to the documented qubit-layout difference; their one- and
two-block corruption profiles differ by at most 0.035 and 0.009,
respectively.

\section{Design ablations}
\label{app:eval-ablate}

The encoding ablation tests whether saving qubits is worth weakening
the syndrome. In Figure~\ref{fig:e5}, domain-wall encoding reduces a representative instance
from 24 to 16 qubits, but four of its six possible single-bit flips
move between valid codewords and are therefore silent. Under the
i.i.d.\ flip channel, its measured detectable-flip rate is 0.33,
compared with 0.94 for one-hot encoding. End to end, domain-wall
\syn{}$_G$ reaches proper-coloring rates of 0.46--0.54, statistically
indistinguishable from domain-wall purification at approximately 0.5.
One-hot \syn{}$_G$, by contrast, reaches 0.92 where one-hot purification
reaches zero. This result supports D1's design rule: repairability
depends on syndrome visibility, not qubit count alone.

% Two runtime knobs produce null results at the tested power of 12 seeds
% and $v\le24$. Transition reordering yields paired
% $p=0.75$--$0.90$, while the ensemble-prior tie-break yields $p=0.29$
% and $0.45$. We therefore retain neither modification in the reported
% system.

\end{document}